\documentclass[a4paper, 12pt]{article}

\usepackage{inputenc}
\usepackage{amsfonts}
\usepackage{amsmath}
\usepackage{amssymb}
\usepackage{amsthm}
\usepackage{booktabs}
\usepackage{bbm}
\usepackage{float}
\usepackage{multirow}
\usepackage{graphicx} 
\usepackage{graphics}
\usepackage[semicolon,authoryear]{natbib}
\usepackage{longtable}
\usepackage[]{threeparttable}
\usepackage{rotating}
\usepackage[dvipsnames]{xcolor}
\usepackage{comment}
\usepackage{tikz}
\usetikzlibrary{positioning}
\usepackage{textcomp}
\usepackage{makecell}
\usepackage{xurl}
\usepackage{float}
\usepackage{soul}
\usepackage[ruled,vlined,linesnumbered]{algorithm2e}

\newtheorem{theorem}{Theorem}
\newtheorem{corollary}{Corollary}

\usepackage{hyperref}
\usepackage[noabbrev,capitalize]{cleveref}
\usepackage{marvosym}

\crefname{algocf}{algorithm}{Algorithms}

\title{Robust confidence intervals for generalized linear models}

\author{Andrea Panarotto$^{1,*}$,
Riccardo De Santis$^{1}$, and Livio Finos$^{1,2}$\\
\small
$^{1}$Department of Statistical Sciences, University of Padova, Padova, Italy\\
\small
$^{2}$Padova Neuroscience Center, University of Padova, Padova, Italy\\
\small
\Letter andrea.panarotto@unipd.it
	   }
\date{}
\begin{document}

\maketitle

\begin{abstract}
Reliable uncertainty quantification is a central challenge in the analysis of modern biomedical data, where complex sources of variability often violate standard modeling assumptions. In generalized linear models (GLMs), confidence intervals for regression parameters provide such information, but they typically rely on correct specification of the mean–variance relationship. However, overdispersion, heteroskedasticity, and unobserved biological variability can lead to substantial undercoverage in practice. We propose a method for constructing confidence intervals that remains valid under variance misspecification. The approach is based on the inversion of hypothesis tests obtained by sign-flipping individual score contributions and uses a bisection algorithm to determine the interval bounds. The resulting intervals inherit robustness properties from the underlying tests, and we establish their asymptotic validity under general variance misspecification. Through simulation studies, we show that the proposed method achieves reliable coverage and outperforms standard Wald-type intervals when model assumptions are violated. We illustrate the approach in a differential expression analysis of RNA-sequencing data from a cancer study, where heterogeneous variability is pervasive and parametric methods can yield inconsistent inference. The proposed framework provides a practical and robust alternative to conventional quasi-likelihood or Wald-based methods for interval estimation in GLMs, particularly suited to high-throughput biomedical applications.

\end{abstract}
\textbf{Keywords:} Generalized linear models; RNA-sequencing; Robust inference; Sign-flipping; Test inversion; Variance misspecification.

\section{Introduction}
\label{sec:intro}

Confidence intervals play a central role in statistical inference, providing uncertainty quantification that complements point estimates and supports principled scientific conclusions. In generalized linear models (GLMs), confidence intervals are routinely used to assess the magnitude and relevance of covariate effects across a wide range of applications, from biomedical studies to economics and the social sciences \citep[e.g.][]{mcgrath_green_2018,rathnayake_bootstrapping_2023,xia_debiased_2023,kang_accurate_2023,donayre_likelihood-ratio-based_2025,chowdhury_improving_2026}. 

Standard approaches to interval construction in GLMs are inherently parametric and rely on the correct specification of the underlying model, since they are based on a full-likelihood approach \citep[Section 4.3]{agresti2015foundations}. When these assumptions are violated (as is common in practice due to heteroskedasticity, overdispersion, or unobserved sources of variability) classical confidence intervals may suffer from severe undercoverage, leading to inaccurate inference. Although robust covariance estimators, such as sandwich-type estimators \citep{cox1961tests, cox1962further, Huber1967, White1982}, can partially alleviate these issues in large samples, their performances show often a problematic slow convergence to the nominal level \citep[see for instance the simulation study in][which focuses on hypothesis testing, and references therein]{de2025inference}.

Despite the prevalence of variance misspecification in applied settings, relatively few methods are available to construct confidence intervals in GLMs that are robust to such violations. In simple settings, nonparametric testing procedures, as permutation and randomization tests, are well known for their robustness properties under weak distributional assumptions. Since confidence intervals can be obtained by inversion of hypothesis tests, this naturally suggests the development of interval estimators that inherit the robustness of the underlying tests \citep{casella1984confidence,pesarin_multivariate_2001}. However, they are not easy to implement when nuisance confounders are present, as in regression-type models, since the sample units are not more exchangeable. 
Two fully nonparametric approaches, based on sample-splitting, have been proposed, respectively, by \cite{wasserman2020universal} and \cite{kuchibhotla2024hulc}. However, these methods can easily lead to overly large confidence intervals and to conservative inference \citep{tse2022note, de2025inference}.

In this work, we propose a method for constructing confidence intervals in GLMs based on the inversion of a resampling-based test, that is, the sign-flip score test \citep{hemerik2020robust,de2025inference}. This test is based on a semi-parametric approach, and have recently attracted attention due to its finite-sample behavior and robustness to general variance misspecification under mild assumptions. Exploiting these properties, the proposed confidence intervals remain reliable even when the parametric variance structure is incorrectly specified.

However, constructing confidence intervals by test inversion presents additional challenges that are often overlooked in the literature. In particular, p-value functions based on permutation and randomization tests are inherently discrete and typically stepwise, as noted by \cite{glazer_fast_2025}. Moreover, such p-value functions might not be monotone in the parameter of interest. Although this might not be a problem from the formal point of view, it is of practical importance, complicating the inversion procedure and potentially compromising the intended nominal coverage. Many existing approaches to confidence intervals derived from permutation tests do not explicitly address these issues \citep[e.g.,][]{garthwaite_confidence_1996,pagano_obtaining_1983} or need further assumptions to ensure such monotonicity: \cite{pesarin_multivariate_2001}, for example, assumes the same variance for all hypotheses tested.

We propose a principled and practically implementable framework for robust confidence interval estimation in GLMs. We provide finite-sample and asymptotic monotonicity results for p-value functions arising from sign-flip score tests under suitable conditions. Building on these theoretical guarantees, we introduce a bisection algorithm for interval construction that explicitly accounts for discreteness and ensures asymptotically correct nominal coverage, overcoming the difficulties mentioned above.

The remainder of the paper is organized as follows. Section 2 reviews the sign-flip score tests and their main properties. Section 3 presents the proposed confidence interval construction, including the monotonicity results and the bisection algorithm. Sections 4 and 5 compare the proposed method with classical parametric approaches, first in simulation studies under variance misspecification and then in an application to RNA-sequencing data from a cancer genomics study. Concluding remarks are given in Section 6.

\section{Sign-flip score tests}
\label{sec:sign_flip}

In this section we recap the main ideas behind the flipscores approach as detailed in \cite{de2025inference}, in the ambit of Generalized Linear Models (GLMs).

Let us consider a problem where we observe $n$ independent observations $y=\{y_1,\dots,y_n\}^T$ realization of the random variable $Y$. Note that the dependence on $n$ of the vectors and matrices will be generally suppressed to help the readability. Each $i$-th observation is assumed to have a density belonging to the exponential dispersion family, \textit{i.e.} a density of the form \citep{agresti2015foundations}
\begin{equation} \label{eq:model}
    f(y_i; \theta_i, \phi_i)=
    \exp\left\{\frac{y_i\theta_i-b(\theta_i)}{\phi_i}+c(y_i,\phi_i)\right\}.
\end{equation}
Here $\theta_i$ and $\phi_i$ are respectively the canonical and the dispersion parameter, while $b(\cdot)$ is a known function. In this setting, we derive for the random variable $Y_i$, related to the observed value $y_i$,
\begin{equation}
    \mu_i=\mathbb{E}(Y_i)=b'(\theta_i)\,, \qquad
\mathbb{V}(Y_i)=b''(\theta_i)\phi_i\,.
\label{eq:glm_mv}
\end{equation}
The $\mu_i,\ i=1,\dots,n$ are assumed to depend on some observed covariates $(x_i,\mathbf{z}_i)$ through the equation
\begin{equation}
    g(\mu_i) = \eta_i = x_i\beta+\mathbf{z}_i^T\boldsymbol{\gamma}\,.
    \label{eq:glm_link}
\end{equation}
We define $X=\{x_i\}_{i=1}^n$ as the $n$-vector of the target covariate and $Z=\{\mathbf{z}_i\}_{i=1}^n$ as the $n \times p$-matrix of nuisance covariates. The regression parameters $(\beta,\boldsymbol{\gamma})$ are, respectively, a scalar and a $p$-dimensional parameters. Remarkably, no restrictions are posed on the dispersion parameters. We will assume through the paper the conditioning on the observed covariates, as is usual in regression-based analyzes. Moreover, we will assume the same technical assumptions as in \cite{de2025inference}, without further details. These assumptions should not be considered as limiting the proposal of this manuscript, but are made to avoid tricky and counter-intuitive situations.

\cite{de2025inference} propose a test for 
\[
    H_0:\beta=\beta_0
\]
against any pre-determined general alternative. The test is based on the idea of sign-flips, in the following way.

Let $S$ be the first derivative of the likelihood function (known as the score vector), and $\mathcal{I}$ be the second derivative (known as Fisher information matrix). They can be partitioned as 
\[
    S=\begin{pmatrix}
        s_\beta & s_\gamma
    \end{pmatrix}^T
\]
and
\[
    \mathcal{I}=
    \begin{pmatrix}
    \mathcal{I}_{\beta,\beta} & \mathcal{I}_{\beta,\boldsymbol{\gamma}} \\
    \mathcal{I}_{\boldsymbol{\gamma},\beta} &
    \mathcal{I}_{\boldsymbol{\gamma},\boldsymbol{\gamma}}
    \end{pmatrix}
    =
    \begin{pmatrix}
    X^T W X & 
    X^T W Z \\
    Z^T W X &
    Z^T W Z
    \end{pmatrix}
\]
where, given $\ell(\cdot)$ as the likelihood function,
\[
    \begin{array}{c}
    \dfrac{\partial \ell(\beta,\boldsymbol{\gamma})}{\partial \beta}=s_\beta=X^T D V^{-1}(Y-\boldsymbol{\mu})\,, \\
    \dfrac{\partial \ell(\beta,\boldsymbol{\gamma})}{\partial \boldsymbol{\gamma}}=s_{\boldsymbol{\gamma}}=Z^T D V^{-1}(Y-\boldsymbol{\mu})\,, \\
    D=\mathrm{diag}\left\{\dfrac{\partial \mu_i}{\partial \eta_i}\right\}\,, \quad V=\mathrm{diag}\{\mathbb{V}(y_i)\}\,,
    \end{array}
\]
and $W=D V^{-1} D$.

\cite{hemerik2020robust} define the effective score as
\[
S(\beta,\boldsymbol{\gamma}) = s_\beta - \mathcal{I}_{\beta, \boldsymbol{\gamma}} \mathcal{I}_{\boldsymbol{\gamma}, \boldsymbol{\gamma}}^{-1} s_{\boldsymbol{\gamma}}\,.
\] 
In the context of generalized linear models, the statistic can be written as
\[
    S(\beta,\boldsymbol{\gamma})=n^{-1/2}X^TW^{1/2}(I-H) V^{-1/2}(Y-\boldsymbol{\mu})
\]
where
\[
\label{eq:hproj}
    H=W^{1/2}Z(Z^TW Z)^{-1}Z^TW^{1/2}
\]
is the hat matrix. Note that $S$ depends on the unknown regression parameters.

As usual for score-type tests, the test of \cite{de2025inference} is obtained computing the test statistic under the model implied by the null hypothesis, which we will denote as null model. In such case, $\beta_0$ enter as an offset term in the estimation procedure. We will denote $S=S(\beta_0,\hat{\boldsymbol{\gamma}})$, and $\hat{\mu}$ the estimate of $\mu$ based on the null model; note that it is defined as a sum of $n$ elements. The flipscores test consists in randomly changing the sign of such elements many different times. If we define, for each flip, a diagonal matrix $F_g$ of $-1$ and $1$, we get a vector of test statistics $S(I),S(F_2),\dots,S(F_G)$ where 
\begin{equation}
    S(F_g)=n^{-1/2}X^T{W}^{1/2}(I-{H})F_g{V}^{-1/2}(Y-\hat{\boldsymbol\mu}).
    \label{eq:eff_score_stat}
\end{equation}
Note that we will denote with $I$ the $n$-dimensional identity matrix. Practically, given that the total of possible flips is $G=2^n$, a subset of such flips is always used -- whose amount will be denoted with $w$. This is theoretically justified in \cite{hemerik2018exact}.

While being already usable, an improvement (in terms of convergence to the nominal level of the test) from the cited effective score test (proposed by \cite{hemerik2020robust}) is obtained in \cite{de2025inference}, which derive the standardized score statistic as
\begin{equation}
    S^*(F_g) = S(F_g)/\mathbb{V}(S(F_g))^{1/2},
    \label{eq:def_standardization}
\end{equation}
where 
\[ 
    \mathbb{V}(S(F_g)) = n^{-1}X^T W^{1/2} (I-H)F_g(I-H)F_g (I-H)W^{1/2}X.
\]
A test using the standardized or the effective score statistics can be built in the following theorem. This coincides with Theorem 2 of \cite{de2025inference} and, remarkably, does not require the correct specification of the variance; that is, we can do reliable inference on the regression parameters regardless of the misspecification of the dispersion parameters. 

\begin{theorem}
For every $1\leq g \leq w$, consider the statistic $T_g^n=S^*(F_g) $ (or $S(F_g)$) and let $T^n_{(1)}\leq ...\leq T^n_{(w)}$ be the sorted test-statistics. Consider the test that rejects if $T_1^n > T^n_{\lceil(1-\alpha)w\rceil}$. 
As $n\rightarrow \infty$, under $H_0$ the rejection probability converges to 
$\lfloor \alpha w \rfloor /w \leq \alpha$. 
\end{theorem}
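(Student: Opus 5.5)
The plan follows the standard route for asymptotically exact randomization tests, as in \cite{hemerik2020robust} and \cite{de2025inference}. First I reduce the statistics computed at the null-model estimates to an \emph{oracle} version built with the true nuisance parameters. Then I show that the oracle flipped statistics are asymptotically jointly Gaussian, independent and identically distributed. Finally I read off the rejection probability from the exchangeability of the limit.

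\textbf{Step 1 (linearization).} Let $\varepsilon=V^{-1/2}(Y-\boldsymbol\mu)$ be the standardized errors at the true parameters $(\beta_0,\boldsymbol\gamma)$. Expand $Y-\hat{\boldsymbol\mu}$ around $\boldsymbol\mu$ using the $\sqrt n$-consistency of $\hat{\boldsymbol\gamma}$, which holds even under a misspecified variance because the estimating equation for $\boldsymbol\gamma$ stays unbiased. This gives
\[
V^{-1/2}(Y-\hat{\boldsymbol\mu}) = (I-H)\varepsilon + o_P(1)
\]
in the relevant norm. Hence
\[
S(F_g)=n^{-1/2}X^TW^{1/2}(I-H)F_g(I-H)\varepsilon+o_P(1).
\]
For $g=1$ ($F_1=I$) the two projections collapse by idempotence, giving $S(I)=n^{-1/2}\tilde X^T\varepsilon+o_P(1)$ with $\tilde X=(I-H)W^{1/2}X$. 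For $g\ge2$, the flips are random, so the cross terms involving $H$ have mean zero. Under the paper's regularity assumptions on the leverages (for instance $\max_i h_{ii}\to0$), these terms are negligible, so $S(F_g)=n^{-1/2}\tilde X^TF_g\varepsilon+o_P(1)$.

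\textbf{Step 2 (joint CLT).} Fix the $w$ flip matrices, with random independent Rademacher diagonals. Apply the Lindeberg--Feller CLT (Cramér--Wold) to the vector
\[
\bigl(n^{-1/2}\textstyle\sum_i \tilde x_i f_{gi}\varepsilon_i\bigr)_{g=1}^w .
\]
Because the $f_{gi}$ are independent with mean zero, the covariance between distinct $g,g'$ is $n^{-1}\sum_i\tilde x_i^2 f_{gi}f_{g'i}\mathbb V(\varepsilon_i)$, which tends to $0$ by a law of large numbers over the flips. Each diagonal entry tends to the same limit
\[
\sigma^2=\lim n^{-1}\sum_i\tilde x_i^2\,\mathbb V(\varepsilon_i),
\]
and this holds for arbitrary $\phi_i$. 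That is the point where robustness to variance misspecification enters: the unknown variances appear identically in every coordinate. So $(T^n_1,\dots,T^n_w)$ converges to $N(0,\sigma^2 I_w)$, provided $\sigma^2>0$, which is a non-degeneracy assumption. For $S^*$, the standardizing factor $\mathbb V(S(F_g))$ converges to a common deterministic constant for all $g$. By Slutsky's lemma the limit is then $N(0,c\,I_w)$, again i.i.d.

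\textbf{Step 3 (conclusion).} The limit vector is exchangeable and has almost surely distinct coordinates. Hence $T_1$ is uniformly distributed over the ranks, and the event
\[
T_1>T_{(\lceil(1-\alpha)w\rceil)}
\]
has limiting probability equal to $(w-\lceil(1-\alpha)w\rceil)/w=\lfloor\alpha w\rfloor/w$. The indicator is a discontinuous functional, but its discontinuity set (ties in the limit) has probability zero, so the continuous mapping theorem and portmanteau theorem transfer the limit to finite $n$. If instead the flips are treated as a uniform draw from the group, the same result follows from Hoeffding's randomization-test argument.

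\textbf{Main obstacle.} I expect the hardest part to be Step 1, in particular controlling $(I-H)F_g(I-H)$ against $F_g$ uniformly. The approach I would try first is to bound the second moment of the remainder $n^{-1/2}\tilde X^T(F_gH+HF_g-HF_gH)\varepsilon$ conditionally on $F_g$. I would then use $\operatorname{tr}(H)=p$ together with bounded fourth moments of $\varepsilon_i$ to show that this remainder is $O_P(n^{-1/2})$ in mean square.
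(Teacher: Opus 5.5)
Your proposal is correct and follows essentially the route behind this result. The paper gives no proof of its own and imports the statement as Theorem 2 of \cite{de2025inference} (building on \cite{hemerik2020robust}), whose argument likewise linearizes the null-model fit into the projection $(I-H)$, uses the randomness of the flips to make the $H$-cross terms negligible, obtains asymptotically independent Gaussian flipped statistics with a common limiting (standardizing) variance as in \eqref{eq:asy_equi_var_flipscores}, and concludes by exchangeability of the limit. One precision for your ``main obstacle'': the remainder is $-n^{-1/2}\tilde X^\top F_g H\varepsilon$, whose conditional second moment need not vanish for a fixed $F_g$; it becomes small only after averaging over the random signs, via $\mathbb{E}_F[F_gHF_g]=\mathrm{diag}(H)$, which gives the bound $n^{-1}\max_i\mathbb{V}(\varepsilon_i)\sum_i h_{ii}\tilde x_i^2\le p\,\max_i\mathbb{V}(\varepsilon_i)\max_i\tilde x_i^2/n$.
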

The p-value associated to the test is $(r + 1) / (w+1)$, where $r=\sum_{g=1}^w 1\left\{T_1^{n} > T_g^{n}\right\}$.

\section{Confidence intervals in flipscores}
Confidence sets are obtained by inverting hypothesis tests \citep{casella1984confidence,pesarin_multivariate_2001}. A $(1-\alpha)$-confidence interval for a parameter $\beta$ is defined as the set of all parameter values $\beta_0$ that would not be rejected by a level-$\alpha$ test of the null hypothesis 
\[
    H_0: \beta = \beta_0
\]
Formally:
    \begin{equation}
        C_{1-\alpha} (X) = \left\{ \beta_0\mid \text{p-value}(X;H_0: \beta= \beta_0)  \geq \alpha\right\}
        \label{eq:test_inversion}\,.
    \end{equation}
This construction guarantees the nominal coverage property of confidence intervals.

Therefore, we fix a confidence level $1 - \alpha$ and, once we obtain the estimate $\hat{\beta}_{\text{obs}}$ of $\beta$, we consider and invert the one-sided tests
\begin{equation}    
    H_0:\ \beta = \beta_0 \quad \text{vs} \quad H_1:\ \beta > \beta_0\,
    \label{eq:one-sided-lower}
\end{equation}
for $\beta_0 < \hat{\beta}_{\text{obs}}$, and
\begin{equation}    
    H_0:\ \beta = \beta_0 \quad \text{vs} \quad H_1:\ \beta < \beta_0\,
    \label{eq:one-sided-upper}
\end{equation}
for $\beta_0 > \hat{\beta}_{\text{obs}}$. We distinguish these two cases to simplify the search for the confidence bounds; in this way, the p-value relative to test (\ref{eq:one-sided-lower}) (resp. (\ref{eq:one-sided-upper})) is maximum in $\beta_0 = \hat{\beta}_{\text{obs}}$ and tends to decrease as $\beta_0$ becomes smaller (resp. larger), that is, more distant from the estimate. So, we invert test (\ref{eq:one-sided-lower}) at significance level $\alpha/2$ to find the lower bound of the confidence interval, and test (\ref{eq:one-sided-upper}) at the same level to find the upper bound. {Moreover, the p-values relative to the test (\ref{eq:one-sided-lower}) with $\beta_0>\hat{\beta}_{\text{obs}}$ (or (\ref{eq:one-sided-upper}) with $\beta_0<\hat{\beta}_{\text{obs}}$) are generally higher than the one where $\beta_0 = \hat{\beta}$, which is around 0.5 and will never be lower than any sensible choice for $\alpha/2$}. From now on, let us restrict ourselves to the search for the lower limit, as everything is analogous for the upper one.

We call $f_p\left(\beta_0\right)$ the function that associates to each $\beta_0 \leq \hat{\beta}_{\text{obs}}$ the p-value from the corresponding corresponding test (\ref{eq:one-sided-lower}). We wish for a monotonicity property for $f_p$. In fact, if $f_p$ were non-decreasing, we would be able to find a unique $\beta_L$ such that $\beta_0 \notin CI$ for $\beta_0 < \beta_L$ and $\beta_0 \in CI$ for $\beta_L \leq \beta_0 \leq \hat{\beta}_{\text{obs}}$. A simple search algorithm, such as a bisection algorithm, would guarantee that we get $\beta_L$ (up to a certain tolerance) and that the corresponding confidence set would be a connected interval. 

This monotonicity property holds for parametric tests and, as proved by \cite{pesarin_multivariate_2001}, for permutation tests when the observations are fully exchangeable. In our framework, this means that the only nuisance parameter allowed would be the intercept. We prove that monotonicity holds also when employing the effective flipscores test on linear models, in presence of any possible set of confounders.

\begin{theorem}
    \label{thm:monot_lm_eff}
    Assume that the observations are independent and generated by the linear model 
    \[
        Y_i = x_i\beta + \mathbf{z}_i^T \boldsymbol{\gamma} + \varepsilon_i\,,\quad \varepsilon_i \sim \mathcal{N}(0, \sigma^2)\, ,
    \]
    for $i = 1,\ldots,n$.
    Let $\hat{\beta}_{\text{obs}}$ be the maximum likelihood estimate of $\beta$ and let $\beta_A < \beta_B < \hat{\beta}_{\text{obs}}$. Let $p_k$, $k=A,B$, be the p-values associated with the effective flipscores tests $H_0^{(k)}:\beta=\beta_k$ versus $H_1^{(k)}: \beta > \beta_k$, for $k = A,B$, respectively, assuming that the tests share the same set of flip matrices. Then, $p_A\leq p_B$.
\end{theorem}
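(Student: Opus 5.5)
The plan is to write the effective score statistic in closed form for the Gaussian linear model. In that form, each flipped statistic is an affine function of $\delta=\hat{\beta}_{\text{obs}}-\beta_0$. Comparing it with the observed (identity-flip) statistic then reduces to an inequality whose right-hand side is monotone in $\delta$.

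\textbf{Step 1 (closed form).} Under the linear model we have $D=I$, $V=\sigma^2 I$ and $W=\sigma^{-2}I$. The hat matrix is therefore $H=Z(Z^TZ)^{-1}Z^T$, which does not depend on $\beta_0$ or $\sigma^2$. Under $H_0:\beta=\beta_0$ the null-model fitted values give $Y-\hat{\boldsymbol\mu}=(I-H)(Y-X\beta_0)$, because $\beta_0$ enters as an offset and $\hat{\boldsymbol\gamma}$ is the least-squares fit on $Z$. Since $(I-H)$ is idempotent, (\ref{eq:eff_score_stat}) becomes
\[
S(F_g)=\frac{1}{n^{1/2}\sigma^{2}}\,\tilde X^T F_g\,(\tilde Y-\tilde X\beta_0),\qquad \tilde X=(I-H)X,\quad \tilde Y=(I-H)Y .
\]
The positive constant $n^{-1/2}\sigma^{-2}$ is common to all $g$, so it does not affect the ranks and I will drop it.

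\textbf{Step 2 (decomposition).} By Frisch--Waugh--Lovell, $\hat{\beta}_{\text{obs}}=\tilde X^T\tilde Y/\tilde X^T\tilde X$. The full-model residual $\hat r=\tilde Y-\tilde X\hat{\beta}_{\text{obs}}$ satisfies $\tilde X^T\hat r=0$. Writing $\tilde Y-\tilde X\beta_0=\hat r+\delta\tilde X$ gives
\[
T_g=a_g+\delta\, b_g,\qquad a_g=\tilde X^TF_g\hat r,\quad b_g=\tilde X^TF_g\tilde X=\sum_i f_{g,i}\tilde x_i^2 .
\]
Here $a_g$ and $b_g$ do not depend on $\beta_0$, and the flip matrices are shared by the two tests. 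For the identity flip, $a_1=0$ and $b_1=c:=\sum_i\tilde x_i^2$, so $T_1=\delta c$. Moreover $b_g\le c$ for every $g$.

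\textbf{Step 3 (monotonicity of the count).} The p-value is a nondecreasing function of the number of flips whose statistic is at least as extreme as the observed one, $N(\delta)=\#\{g: T_g\ge T_1\}$. I will check carefully that the definition of $r$ given after Theorem 1 is equivalent to this up to the $+1$ convention. The event $T_g\ge T_1$ is equivalent to
\[
a_g\ \ge\ \delta\,(c-b_g).
\]
Since $c-b_g\ge 0$, the right-hand side is nondecreasing in $\delta$. Hence the set of $g$ satisfying the inequality shrinks as $\delta$ grows, and $N(\cdot)$ is nonincreasing on $\delta>0$.

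\textbf{Conclusion.} Since $\beta_A<\beta_B<\hat{\beta}_{\text{obs}}$, we have $\delta_A>\delta_B>0$. Therefore $N(\delta_A)\le N(\delta_B)$, and so $p_A\le p_B$.

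\textbf{Main obstacle.} The only delicate point is the bookkeeping of the p-value convention: whether it uses strict or non-strict inequalities, and whether the identity flip is counted. The key inequality $b_g\le c$ and the vanishing of $a_1$ make every version monotone. I will also verify that $H$ and the common scaling are genuinely free of $\beta_0$; this is what fails beyond the linear model, where $W$ depends on $\hat{\boldsymbol\mu}$.
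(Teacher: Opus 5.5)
Your proof is correct and follows essentially the paper's argument. Both rest on the score being affine in $\beta_0$, with $H$ free of $\beta_0$, and on the nonnegativity of $X^\top(I-H)(I-F_g)(I-H)X$, which is your $c-b_g\ge 0$; centering at $\hat{\beta}_{\text{obs}}$ just repackages the paper's direct comparison of $S^A(F_g)$ with $S^B(F_g)$ as the threshold condition $a_g\ge\delta(c-b_g)$.

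One correction to your closing remark: the common scaling need not be free of $\beta_0$, and in fact is not, since the paper uses the null-estimated $\hat\sigma_k^2$. It only needs to be a positive scalar common to all flips within each test, which your argument already uses.
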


When employing the standardized flipscores test, monotonicity is not given in the finite case. However, we can prove asymptotic monotonicity for linear models. From \cite{de2025inference} (in particular, the proof of Theorem 2, in their Appendix), we have, under the null hypothesis, the following property for the variance of the flipped score statistics:
\begin{equation}
    \lim_{n\rightarrow\infty} \mathbb{V}\left(S(I)\right) - \mathbb{V}(S({F_g})) = 0\,,
    \label{eq:asy_equi_var_flipscores}
\end{equation}
for any $F_g$ and $\beta$. This leads to the following corollary.
\begin{corollary}
    Assume that the observations are independent and generated by the linear model 
    \[
        y_i = x_i \beta + \mathbf{z}_i^T \boldsymbol{\gamma} + \varepsilon_i\,,\quad \varepsilon_i \sim \mathcal{N}(0, \sigma^2)\, ,
    \]
    for $i=1,\ldots,n$. 
    Let $\hat{\beta}_{\text{obs}}$ be the estimate of $\beta$ and let $\beta_A < \beta_B < \hat{\beta}_{\text{obs}}$. Let $p_k$, $k=A,B$, be the p-values associated with the standardized flipscore tests $H_0^{(k)}:\beta=\beta_k$ versus $H_1^{(k)}: \beta > \beta_k$, for $k = A,B$, respectively. Then, as $n\rightarrow\infty$, it holds that $p_A\leq p_B$.
\end{corollary}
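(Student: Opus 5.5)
The plan is to reduce the corollary to Theorem~\ref{thm:monot_lm_eff}. In the linear model the standardized statistic is simply the effective statistic divided by a flip-dependent constant that does not involve the data, and by \eqref{eq:asy_equi_var_flipscores} all these constants become asymptotically the same. First I will write out the linear-model case explicitly. With identity link and constant variance we have $D=I$, $V=\sigma^2 I$ and $W=\sigma^{-2}I$, so $H=Z(Z^TZ)^{-1}Z^T$ does not depend on $\beta_0$ or on the data. Moreover, the null-model residuals are $(I-H)(Y-X\beta_0)$. Hence
\[
S_{\beta_0}(F_g)=n^{-1/2}\sigma^{-2}X^T(I-H)F_g(I-H)(Y-X\beta_0),
\]
which is affine in $\beta_0$. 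The variance $v_g:=\mathbb{V}(S(F_g))=n^{-1}\sigma^{-2}X^T(I-H)F_g(I-H)F_g(I-H)X$ depends only on $F_g$, not on $\beta_0$ or $Y$. The standardized statistics are therefore $T^*_g(\beta_0)=S_{\beta_0}(F_g)/v_g^{1/2}$, with the same $v_g$ for both tests $k=A,B$.

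Second, I will use the mechanism behind Theorem~\ref{thm:monot_lm_eff}, which I take as given. For the effective statistics, the p-value is determined by the counts $r(\beta_0)=\sum_g 1\{S_{\beta_0}(I)>S_{\beta_0}(F_g)\}$ (or $<$, according to the one-sided direction). Because each difference $S_{\beta_0}(I)-S_{\beta_0}(F_g)$ is affine in $\beta_0$, moving $\beta_0$ from $\beta_B$ down to $\beta_A$ can only move the observed statistic towards the rejection side relative to each flipped statistic, so $p_A\le p_B$. For the standardized version, the relevant differences are
\[
\frac{S_{\beta_0}(I)}{v_1^{1/2}}-\frac{S_{\beta_0}(F_g)}{v_g^{1/2}}
=\frac{S_{\beta_0}(I)-S_{\beta_0}(F_g)}{v_1^{1/2}}+S_{\beta_0}(F_g)\Bigl(v_1^{-1/2}-v_g^{-1/2}\Bigr).
\]
The first term is exactly the effective-score comparison, rescaled by a positive constant common to all $g$, so it inherits the monotone ordering from Theorem~\ref{thm:monot_lm_eff}.

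Third, I will show that the second term is asymptotically negligible. By \eqref{eq:asy_equi_var_flipscores}, $v_1-v_g\to0$. Under the standing assumptions of \cite{de2025inference}, $v_1$ is bounded away from zero, so $v_1^{-1/2}-v_g^{-1/2}\to0$. In addition, $S_{\beta_0}(F_g)$ is $O_p(1)$ for each fixed $\beta_0\in\{\beta_A,\beta_B\}$: its stochastic part has bounded variance, and its drift $n^{-1/2}\sigma^{-2}X^T(I-H)F_g(I-H)X(\beta-\beta_0)$ is of the appropriate order under the assumptions. Since there are finitely many flips $w$, a union bound shows that, with probability tending to one, each indicator in the standardized count agrees with the corresponding indicator in the effective count for both $\beta_A$ and $\beta_B$. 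It follows that $P(p_A\le p_B)\to1$, which is the sense in which I read ``as $n\to\infty$, $p_A\le p_B$''.

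The main obstacle is this last step. A vanishing perturbation flips an indicator only if the underlying effective difference is itself near zero, so the argument needs anti-concentration: no point mass at zero in the limit of $S_{\beta_0}(I)-S_{\beta_0}(F_g)$. I would obtain this from the asymptotic joint normality of the flipped scores established in \cite{de2025inference}, where the limiting differences are nondegenerate Gaussians. Alternatively, the drift $(\beta-\beta_0)$ may be of order $\sqrt n$, which makes the perturbation term larger. In that case I would instead show that the ratio of the perturbation to the effective difference tends to zero, using the fact that both scale linearly in the drift and that $v_1^{-1/2}-v_g^{-1/2}\to0$.
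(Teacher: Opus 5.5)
Your proof is correct, but it takes a different route from the paper's.

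\textbf{The paper's route.} The paper does not use Theorem~\ref{thm:monot_lm_eff} as a black box. It reruns that proof's chain of inequalities with the standardized statistics. The only change is the coefficient multiplying $(\beta_A-\beta_B)$, which becomes $X^\top(I-H)\bigl(I\,\mathbb{V}_I^{-1/2}-F_g\,\mathbb{V}_{F_g}^{-1/2}\bigr)(I-H)X$. The paper then uses \eqref{eq:asy_equi_var_flipscores} to argue that this is asymptotically a positive multiple of the positive semidefinite form $X^\top(I-H)(I-F_g)(I-H)X$. This is a statement about the deterministic slope in $\beta_0$ of ${S}^*_{\beta_0}(I)-{S}^*_{\beta_0}(F_g)$, and it needs no distributional input.

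\textbf{Your route.} You write the standardized comparison as a rescaled effective comparison plus the perturbation $S_{\beta_0}(F_g)\bigl(v_1^{-1/2}-v_g^{-1/2}\bigr)$, and you couple the indicators with those of the effective test. This pins down what ``as $n\to\infty$'' means, namely $P(p_A\le p_B)\to1$, whereas the paper only writes ${S^A}^*(F_g)\lesssim{S^A}^*(I)$. The price is the anti-concentration step and the case split on the size of $\beta-\beta_k$, neither of which the slope argument needs.

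\textbf{The slope route can be closed exactly.} Set $u=(I-H)X$ and $q_g=(I-H)F_gu$. Then $u^\top F_gu=u^\top q_g$, $\mathbb{V}_I\propto\|u\|^2$ and $\mathbb{V}_{F_g}\propto\|q_g\|^2$, with the same positive constant. So, up to a common positive factor, the coefficient is $\|u\|-u^\top q_g/\|q_g\|$ (for $q_g\neq0$), which is nonnegative by Cauchy--Schwarz. Hence in the linear model with shared flips the standardized p-values are monotone in finite samples. Your perturbation and anti-concentration machinery is valid, but it is not needed.
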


The extension to generalized linear models requires further steps. In fact, when the null $\beta_0$ changes, the offset is changed, then all $\eta_i$s are affected, and, finally, the means $\mu_i$s change with a non-linear dependence. Outside of the linear model and other cases where the variance does not depend on the mean, this means that the offset will modify the estimate of the weights $W$ in each score. However, we still recover some properties in the asymptotic framework.

\begin{theorem}
\label{thm:monot_glm_asy}
Assume that the observations are independent and generated from a generalized linear model defined by Equations (\ref{eq:model})--(\ref{eq:glm_link}). Let $\hat{\beta}_{obs}$ denote the estimate of $\beta$ and let $\beta_A < \beta_B < \hat{\beta}_{obs}$. Let $S^{(k)}(F_g)$ denote the effective or standardized sign-flip score statistic computed under $H_0^{(k)}:\beta=\beta_k$ with flip matrix $F_g$, and let $S^{(k)}(I)$ denote the observed statistic (identity flip). 
Then, for any fixed flip matrix $F_g$,
\begin{equation*}
        \Pr\left(S^{(A)}(F_g) < S^{(A)}(I) \mid S^{(B)}(F_g) < S^{(B)}(I) \right) \;\longrightarrow\;1
\end{equation*}
as $n\to\infty$. Consequently, the sign-flip p-value for testing $H_0^{(A)}$ is asymptotically no larger than that for testing $H_0^{(B)}$.
\end{theorem}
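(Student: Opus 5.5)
We reduce the GLM case to the linear-model argument of Theorem~\ref{thm:monot_lm_eff} by linearizing the null-model quantities around a common reference point. For each $k=A,B$ the statistic is $S^{(k)}(F_g)=n^{-1/2}X^T\hat W_k^{1/2}(I-\hat H_k)F_g\hat V_k^{-1/2}(Y-\hat{\boldsymbol\mu}_k)$. Here $\hat{\boldsymbol\gamma}_k$ is the null-model estimate with offset $x_i\beta_k$, and $\hat W_k,\hat H_k,\hat V_k$ are evaluated at $\hat{\boldsymbol\mu}_k$.

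\textbf{Step 1 (convergence of the weights).} Under the technical assumptions of \cite{de2025inference}, $\hat{\boldsymbol\gamma}_k$ converges to a pseudo-true value $\boldsymbol\gamma_k^*$. We work in the local regime where $\beta_A,\beta_B$ stay within $O(n^{-1/2})$ of $\hat\beta_{obs}$; this is the regime relevant for the interval bounds, and outside it both tests reject with probability tending to one, so the conditional probability trivially tends to one. In this regime $\hat W_k,\hat V_k,\hat H_k$ differ from common limits $W,V,H$ by $o_p(1)$ in the normalized sense, uniformly in $k$. We replace them by these common limits, and by the continuous mapping theorem the resulting error in $S^{(k)}(F_g)$ is $o_p(1)$. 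For the standardized statistic we additionally use \eqref{eq:asy_equi_var_flipscores}: all the variances $\mathbb V(S^{(k)}(F_g))$ converge to the same limit, so the standardization is asymptotically a common positive constant and does not affect orderings.

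\textbf{Step 2 (linear structure in $\beta_0$).} We expand the working residuals by a first-order Taylor expansion:
\[
\hat V^{-1/2}(Y-\hat{\boldsymbol\mu}_k)\approx V^{-1/2}(Y-\boldsymbol\mu)-W^{1/2}\bigl(x(\beta_k-\beta)+Z(\hat{\boldsymbol\gamma}_k-\boldsymbol\gamma)\bigr).
\]
The projection $I-H$ annihilates the $Z$ term. This gives
\[
S^{(k)}(F_g)=n^{-1/2}X^TW^{1/2}(I-H)F_g\bigl[e-(\beta_k-\beta)(I-H)W^{1/2}X\bigr]+o_p(1),
\]
with $e=V^{-1/2}(Y-\boldsymbol\mu)$. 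This has exactly the form used in the linear case, and the dependence on $\beta_k$ is affine with slope $-n^{-1/2}X^TW^{1/2}(I-H)F_g(I-H)W^{1/2}X$. We therefore look at the difference
\[
D_k=S^{(k)}(I)-S^{(k)}(F_g).
\]
It satisfies
\[
D_A-D_B=(\beta_B-\beta_A)\,n^{-1/2}X^TW^{1/2}(I-H)(I-F_g)(I-H)W^{1/2}X+o_p(1).
\]
Write $v=(I-H)W^{1/2}X$. The quadratic form equals $\sum_i(1-f_{gi})v_i^2\ge 0$, since each $1-f_{gi}\in\{0,2\}$. This is the same argument as in Theorem~\ref{thm:monot_lm_eff}, so $D_A\ge D_B+o_p(1)$.

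\textbf{Step 3 (conditional probability and p-values).} Conditionally on $\{D_B>0\}$, we have $D_A>D_B-o_p(1)$. We must rule out the case where $D_B$ is positive but within $o_p(1)$ of zero. Since $D_B$ converges in distribution to a continuous (Gaussian-mixture) limit, $\Pr(0<D_B<\epsilon)\to$ something small uniformly as $\epsilon\to0$. Combined with a positive limiting probability of the conditioning event, this gives the conditional probability tending to one. Applying this to each of the finitely many $w$ flips and summing indicators gives $r_A\ge r_B$ with probability tending to one. Hence the p-value for $H_0^{(A)}$ is asymptotically no larger than that for $H_0^{(B)}$, because the p-value is monotone in these counts.

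\textbf{Main obstacle.} The hardest part is Step 1–2: controlling uniformly in $k$ the remainder terms that come from the $\beta_0$-dependent weights $\hat W_k$. Away from the local regime these weights genuinely differ, so the argument must either confine itself to $n^{-1/2}$-neighbourhoods or separately show that distant $\beta_0$ give $D_k\to+\infty$ for all flips. I would first try the local contiguous-alternative argument, and handle fixed-distance $\beta_0$ by a consistency argument showing that the observed score diverges at rate $n^{1/2}$ while the flipped ones remain $O_p(1)$.
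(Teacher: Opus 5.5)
Your route differs genuinely from the paper's.

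\textbf{The paper's route.} The paper never linearizes the flipped statistics in $\beta_k$. It expands only the observed standardized score, ${S^{k}}^*(I)=c_k\sqrt n(\beta_k-\hat\beta_n)+W_n+o_p(1)$, with $c_k$ asymptotically common for $\beta_k=\hat\beta+O(n^{-1/2})$. It then appeals to the self-normalized and multiplier CLTs to say that ${S^{k}}^*(F_g)$ is asymptotically $N(0,1)$ with no drift. So moving from $\beta_B$ to $\beta_A$ pushes the observed statistic further into the tail of an essentially unchanged reference distribution.

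\textbf{Your route.} You linearize observed and flipped statistics alike and recover the linear-model structure. You then transfer the proof of Theorem~\ref{thm:monot_lm_eff} flip by flip through $D_A-D_B=(\beta_B-\beta_A)\,n^{-1/2}\sum_i(1-f_{gi})v_i^2+o_p(1)$. An explicit no-atom-at-zero argument then handles the conditioning event.

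\textbf{What each buys.} Your comparison is sharp exactly in the $n^{-1/2}$-neighbourhood, where the paper's ``drift versus $O_p(1)$'' reasoning is weakest. There the observed statistics are themselves $O_p(1)$, and the paper does not track the $k$-dependence of $S^{(k)}(F_g)$. You also treat the conditioning rigorously. The price is that the linearization of $\hat W_k$ fails for non-local $\beta_k$, so you need a separate argument there. The one you sketch (observed score diverging at rate $\sqrt n$, flipped ones $O_p(1)$) is essentially the paper's whole proof. Your anti-concentration step, like the paper's multiplier CLT, also requires randomly drawn flips: for a deterministic flip sequence close to $I$, $D_B$ degenerates at zero.

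\textbf{A repair needed in Step 2.} $I-H$ does not annihilate $W^{1/2}Z(\hat{\boldsymbol\gamma}_k-\boldsymbol\gamma)$ in the statistic. $F_g$ sits between the projection and the residual, and $(I-H)F_gW^{1/2}Z\neq 0$. As written, the step also does not explain where the $(I-H)$ in front of $W^{1/2}X$ comes from.

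The right mechanism is the null-model score equation $Z^T\hat W_k^{1/2}\hat V_k^{-1/2}(Y-\hat{\boldsymbol\mu}_k)=0$. To first order it gives $W^{1/2}Z(\hat{\boldsymbol\gamma}_k-\boldsymbol\gamma)\approx H\bigl[e-(\beta_k-\beta)W^{1/2}X\bigr]$. Hence the linearized residual is $(I-H)\bigl[e-(\beta_k-\beta)W^{1/2}X\bigr]$, the GLM analogue of $\hat{\boldsymbol\mu}_k=X\beta_k+H(\mathbf y-X\beta_k)$ in Theorem~\ref{thm:monot_lm_eff}. This puts $(I-H)$ on $e$ as well. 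Your slope term is already correct and the $e$-part cancels in $D_A-D_B$, so your conclusion is unaffected.
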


The proofs to the theorems in this section can be found in Appendix A.
    
\subsection{Monotonicity example and counter-example}
\cref{fig:monotonicity} provides examples of a non-decreasing p-value function and of a non-monotonic p-value function, on simulated observations. The datasets are built by simulating $n$ observations from the logistic model
\begin{equation}
    Y_i\sim Bernoulli(\pi_i)\,,\quad \pi_i = \frac{\exp(\eta_i)}{1 + \exp(\eta_i)}\,, \quad \eta_i = \beta x_i + \gamma z_i\,,
    \label{eq:monot_ex_model}
\end{equation}
where $(\beta,\gamma) = (0,-0.5)$ and
\begin{equation}
    (X_i,Z_i) \sim N_2(\mathbf{0}, \Sigma)\,, \qquad \Sigma = \begin{bmatrix}
        1 & 0.2 \\
        0.2 & 1
    \end{bmatrix}.
    \label{eq:monot_ex_covs}
\end{equation}
The number of observations is $n=50$ for the monotonic example, and $n=20$ for the non-monotonic. Test (\ref{eq:one-sided-lower}) is performed as $\beta_0$ varies on an uniform grid to provide the p-values.

In the monotonic case, that the lower bound $\beta_L$ of the confidence interval is obtained when $f_p$ reaches the value $\alpha/2$, ensuring the efficacy of a bisection algorithm. In the non-monotonic case, the bisection algorithm does not ensure the retrieval of the correct lower bound $\beta_L$, but might converge to a bound $\tilde{\beta}_L$ such that not all points $\beta_1 < \beta_L$ imply $f_p(\beta_1) < \alpha/2$. The black points between $\beta_L$ and $\tilde{\beta}_L$ are wrongly excluded from the confidence interval. This justifies the need for the theoretical results in \cref{thm:monot_lm_eff,thm:monot_glm_asy}.

    
\begin{figure}
    \centering
    \includegraphics[width=\linewidth]{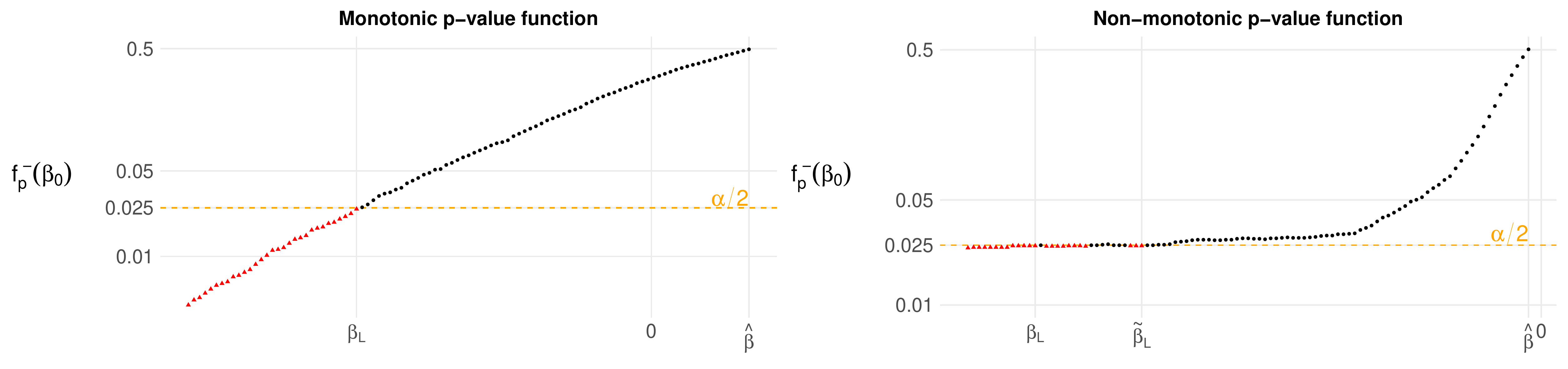}
    \caption{Values of $f_p(\beta_0)$ on an equispaced grid of values between $\hat{\beta}_{\text{obs}} - 1$ and $\hat{\beta}_{\text{obs}}$. The $n$ observations have been generated from a logistic model. The figure shows in red the points with an associated p-value lower than $\alpha/2$, and in black the points that should belong to the confidence set. With $n = 50$ (left) the p-value function is monotonic. With $n = 20$ (right), the function is non-monotonic and the bisection procedure might lead to inaccurate inference.}
    \label{fig:monotonicity}
\end{figure}

However, we should mention that, in practice, these problems are not so relevant: the monotonicity property we seek is obtained for relatively small $n$ (we already struggle to find counter-examples for $n = 25$), so nominal coverage is easily ensured, as shown in the simulations in \cref{sec:simul}. Moreover, since a limited number of random flip matrices is adopted, the use of the same flips across the iterations of the bisection algorithm reduces further the probability of stumbling into non-monotonicity situations.

\subsection{Bisection search algorithm}

In this section, we propose the algorithm for the search of the confidence interval bounds. Let $f_p^-(\beta_0)$ be the function that returns the p-value for the right-sided test, with $\beta_0 < \hat{\beta}$, and let $f_p^+(\beta_0)$ be the analogous function that returns the p-value for the left-sided test with $\beta_0 > \hat{\beta}$. We are assuming $f_p^-$ to be non-decreasing and $f_p^+$ to be non-increasing in the respective domains. We propose two options, producing respectively what we call an ``equitailed" confidence interval and a symmetric confidence interval. Both algorithms are based on a bisection-like search.

For the \textit{equitailed confidence interval}, we look for $\beta_L,\ \beta_U$ such that 
\begin{equation}
    \begin{aligned}   
    \beta_L = \inf \left\{\beta_0 < \hat{\beta}:f_p^-(\beta_0)\geq\alpha/2\right\},\\
    \beta_U = \sup \left\{\beta_0>\hat{\beta}:f_p^+(\beta_0)\geq\alpha/2\right\}.
    \end{aligned}
    \label{eq:bound_definition}
\end{equation}

Again, we will focus on the left side as the right part is analogous. To start our bisection procedure, we need to restrict our search to a starting interval. The upper bound of the interval could be $\hat{\beta}$, but the lower bound is not so easy to select, as we need to find a value $\tilde{\beta}$ such that $f_p^-(\tilde{\beta)}<\alpha/2$. The only way to find such a value is trial and error by moving progressively away from $\hat{\beta}$ and repeating flipscores tests to compute the corresponding p-values. We select an amplitude $\epsilon$ and set $\tilde{\beta}^{(0)} = \hat{\beta}$ and, for $i =1,2,\ldots$, $\tilde{\beta}^{(i+1)} = \tilde{\beta}^{(i)} - \epsilon$. We stop as soon as we find an $i$ such that $f_p^-\left(\tilde{\beta}^{(i+1)}\right) < \alpha/2$, and define $\tilde{\beta} = \tilde{\beta}^{(i+1)}$. We can then start the bisection search between $\tilde{\beta}$ and $\tilde{\beta}+ \epsilon$. 

In the first iteration, we move towards $\hat{\beta}$ by a quantity $\epsilon/2$. We repeat the test at the obtained point and compute the p-value $p$. If $p$ is larger than or equal to $\alpha/2$, our guess remains $\tilde{\beta}$; we half the step size and move away from $\hat{\beta}$. Otherwise, we have found a new conservative confidence bound and should move again toward $\hat{\beta}$ after halving the step size. We proceed iteratively until the step size becomes smaller than a tolerance value.
One may expect that once we find a value $\beta_0$ with $f_p^-(\beta_0)=\alpha/2$, our search is completed, but finding a single value value that satisfies the condition does not ensure that we have found all valid points in the sets in Equations (\ref{eq:test_inversion}) and (\ref{eq:bound_definition}). In practice, for the sake of conservativeness, rather than looking for the $\inf$ in \cref{eq:bound_definition}, we look for
\[
    \tilde{\beta}_L = \max \left\{\beta_0 < \hat{\beta}:f_p^-(\beta_0)<\alpha/2\right\}.
\]
The monotonicity of $f_p^-$ and the absence of a stopping condition other than reaching the tolerance step size make it that the difference between $\tilde{\beta}_L$ and $\beta_L$ is at most the tolerance. 

The selection of $\epsilon$ is crucial: a value too low for $\epsilon$ means that we need a large number of tests to find the initial $\tilde{\beta}$, with a lot of time and computational cost. If $\epsilon$ is too large, we have multiple problems. First, reaching an absolute tolerance requires many bisection steps, which is again computationally expensive. We solve this issue by making the tolerance relative to the initial amplitude $\epsilon$, but this means that starting with a large $\epsilon$ gives a shallow approximation of $\beta_L$. Second, in some models, such as logistic, moving to too extreme values for $\beta_0$ produces a degenerate model, which affects the computation of scores, so taking too large leaps away from $\hat{\beta}$ is not optimal. Our current suggestion for the choice of $\epsilon$ is 
\[
    \epsilon = \max\left\{z_{1-\alpha/2}  \hat{\sigma}_\beta,\ \hat{\beta} / 100,\ 0.2\right\}.
\]
The first element of the set is the semi-interval of the classical Wald-type confidence set, calculated as the product of the Gaussian quantile depending on the level of the test and the consistent estimate of the standard deviation of the parameter of interest \citep[see][Chapter 2]{salvan_modelli_2020}. It is used as a baseline to provide a first approximation of the width of the interval. At the same time, since the estimated standard deviation is too small in case of variance misspecification, we account for the scale of the estimate by including a term that depends on $\hat{\beta}$. We also provide an arbitrary baseline, in case the estimate $\hat{\beta}$ and the parametric semi-interval are both close to 0. Numerical experiments led to the choice of the values 100 and 0.2, which allow the method to reach $\tilde{\beta}^{(0)}$ in a few attempts. We then look for starting values up to a distance of $10\epsilon$ from $\hat{\beta}$, and otherwise assume that the null model is becoming degenerate and use infinite as confidence bound in the corresponding direction.

In the \textit{symmetric confidence interval}, we look for a positive value $\delta_C$ such that 
\[
    \delta_C = \sup\left\{\delta: f_p^-(\hat{\beta}-\delta) + f_p^+(\hat{\beta}+\delta) \geq \alpha\right\}.
\]
The bisection procedure and conservativeness measures are analogous to the equitailed case. At the end of the iteration, we define $\beta_L = \hat{\beta} - \delta_C$ and $\beta_U = \hat{\beta} + \delta_C$, so that we produce a confidence interval that is symmetric around the estimate. Algorithms~1~and~2, in Appendix B, show the pseudocode for the equitailed and symmetric bisection procedures for the search of the confidence bounds.

\section{Simulation}
\label{sec:simul}
We explore six simulation settings. We first fit three correctly specified models, linear, logistic, and Poisson models. Then, we simulate a false (overdispersed) Poisson model, where we sample data from a negative binomial distribution and fit a Poisson distribution. Finally, we consider two normal models, where we generate with heteroskedasticity, depending on either the covariate corresponding to the regression parameter we test, or on the nuisance covariate, and fit a linear model without accounting for the heteroskedasticity. For each case, we repeat 1000 experiments, varying $N$ between 25, 50, and 100. The confidence level is $1-\alpha = 0.95$.

We both consider the equitailed and the symmetric confidence intervals. We compare them with two Wald-type confidence intervals, one with the classical estimate of covariance and one where we use sandwich covariance \citep[e.g.][]{fay_small-sample_2001}, computed with the R package \texttt{sandwich} \citep{zeileis2006object,zeileis2020various}. In Figures (\ref{fig:simul_cov_prob}) and (\ref{fig:simul_med_width}) we show the performance of the methods in terms of coverage probability of the true value (the higher the better) and median width of the confidence intervals across the 1000 experiments (the lower the better), respectively.

    
\begin{figure}
    \centering
    \includegraphics[width=\linewidth]{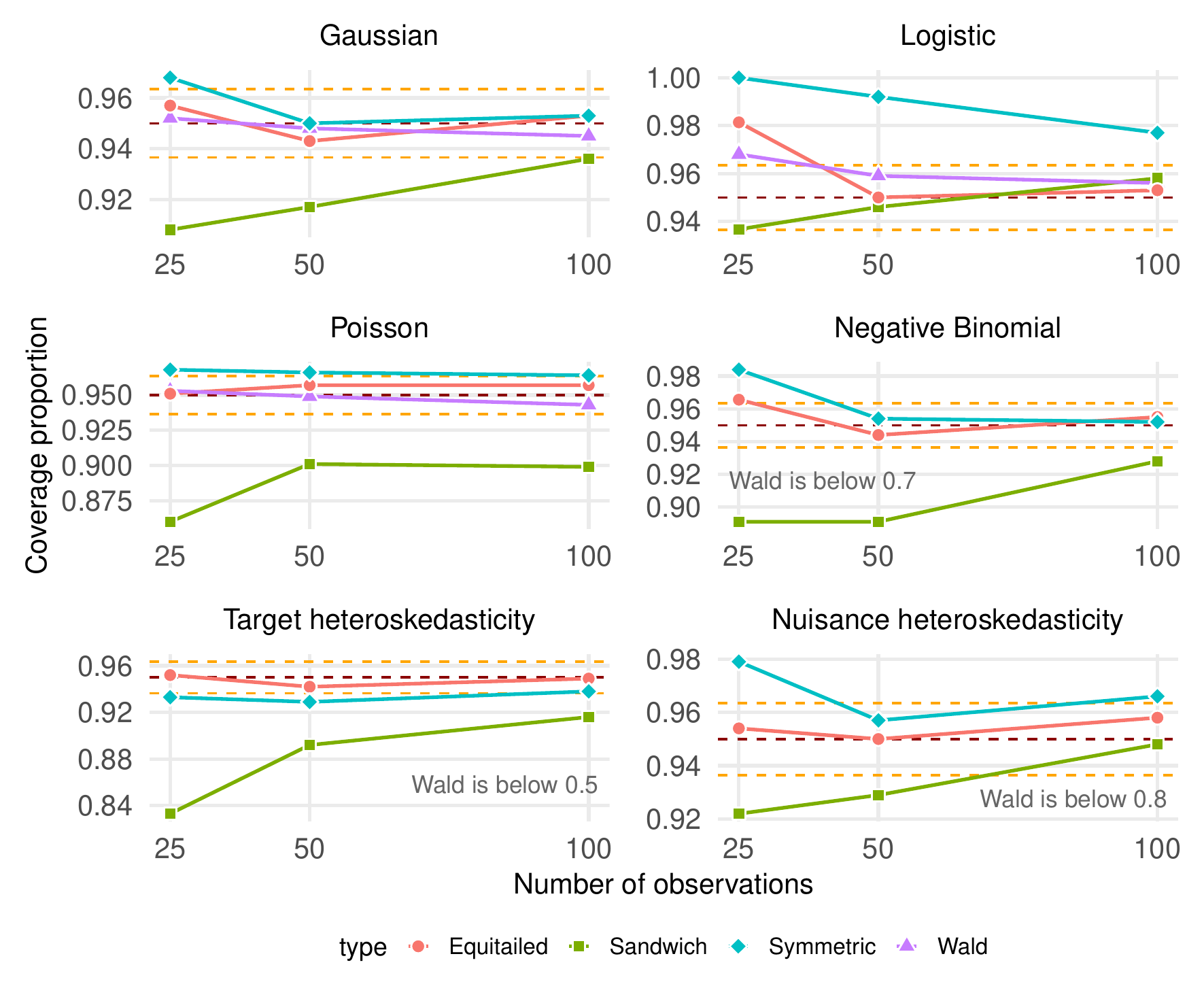}
    \caption{Coverage probabilities of the confidence intervals built from inversion of flipscores tests (equitailed and symmetric) and from parametric methods (Wald and sandwich), in 6 simulation settings, as the number of observations increases.}
    \label{fig:simul_cov_prob}
\end{figure}

\begin{figure}
    \centering
    \includegraphics[width=\linewidth]{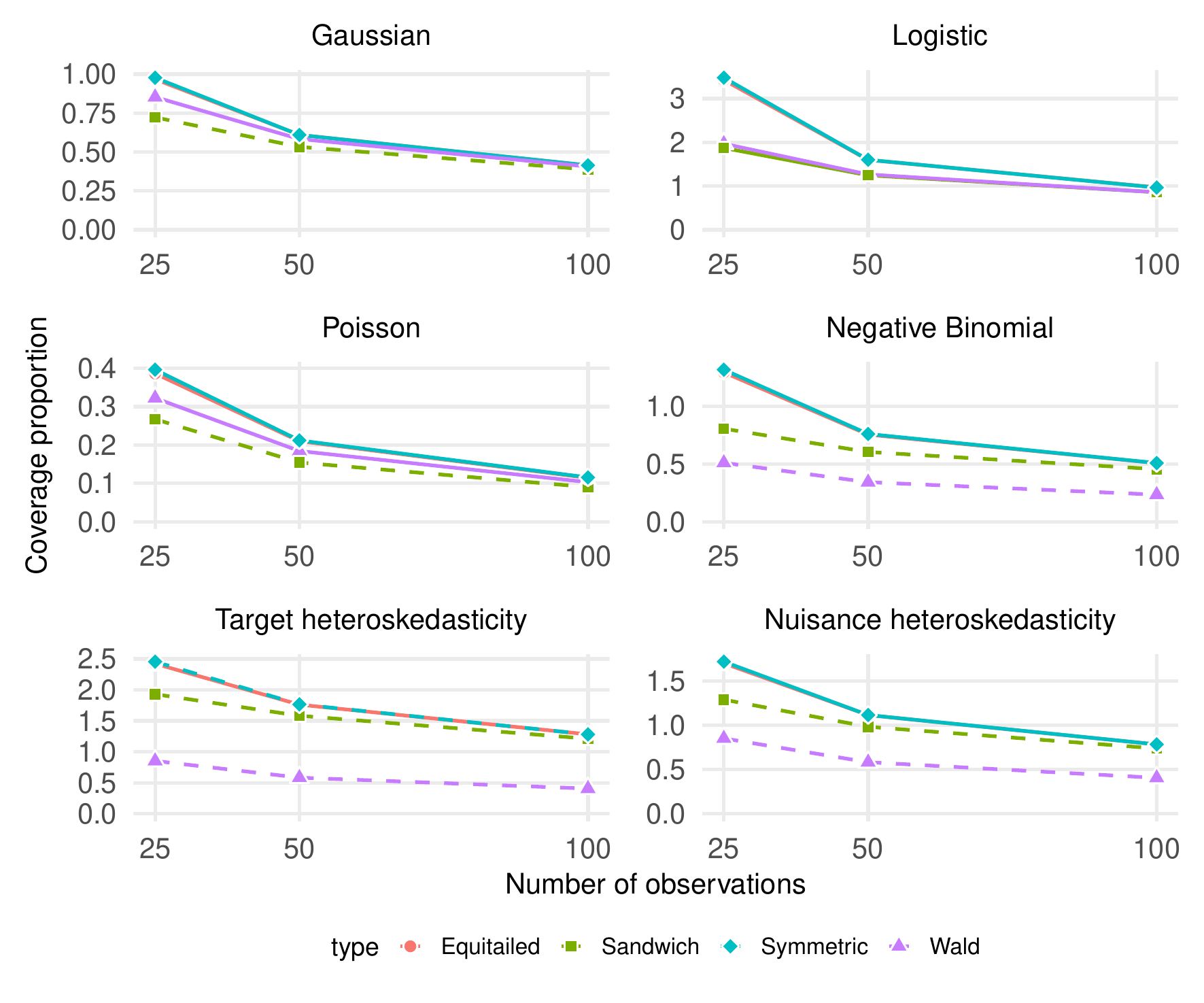}
    \caption{Median of the interval width of the confidence intervals built from inversion of flipscores tests (equitailed and symmetric) and from parametric methods (classical Wald and sandwich), in 6 simulation settings, as the number of observations increases. A dashed line indicates that the method does not reach nominal coverage in the setting.}
    \label{fig:simul_med_width}
\end{figure}

When the model is correctly specified, our methods and the classical Wald confidence intervals reach the nominal coverage proportion, while sandwich-based intervals show a low coverage probability, especially for low $N$ or in the Poisson case. Parametric models show a smaller interval width, as expected, but the difference reduces as $N$ increases. The symmetric flip-based interval shows a larger coverage proportion than the equitailed one, especially when $N$ is small, with a small cost in terms of the interval width. In case of the overdispersed Poisson model, the flip-based confidence intervals are the only ones with coverage probability at the nominal level. When we introduce undetected target heteroskedasticity, the equitailed flip-based confidence interval always shows nominal coverage, while the symmetric is below the nominal interval (0.9365, 0.9635) for $N = 25, 50$, with coverage probabilities of 0.933 and 0.929 respectively. It falls again in the nominal interval for $N = 100$. In this case, both parametric methods fail to reach nominal coverage. When heteroskedasticity is on the nuisance parameter, instead, the sandwich-based interval is able to account for it, at least for $N = 100$. Classical Wald-type intervals do not reach nominal coverage, whereas flip-based intervals do for all $N$. In these misspecified cases, the interval widths of the parametric methods are much lower than the flip-based ones, but it is not relevant when they do not reach the nominal coverage. In the only case where the sandwich-based interval reaches such coverage, that is, nuisance heteroskedasticity with $N=100$, the difference in its median width with those of the flip-based methods gets close to 0.

\section{Application}

High-throughput genomic data are well known to exhibit substantial deviations from idealized parametric assumptions. In particular, overdispersion, heteroskedasticity, and unobserved sources of biological and technical variability are widespread, making reliable uncertainty quantification a central challenge.

We consider differential expression analysis for Liver Hepatocellular Carcinoma using data from The Cancer Genome Atlas \citep[TCGA-LIHC;][]{tomczak_review_2015,erickson2016cancer}. After standard preprocessing, the dataset contains expression measurements for more than 18,000 genes in 344 patients. Each gene expression level is modeled as a function of the tumor stage, in the form of a binary indicator of whether the tumor is in the first pathological stage or in a more advanced stage. The inferential target is the regression coefficient associated with this variable, and the uncertainty is quantified through confidence intervals for this parameter. The gender and age of the patient act as nuisance covariates.

Because the true data-generating mechanism is unknown, we fit both the Poisson and Negative Binomial models. These two choices are standard in RNA-seq analysis, but they rely on markedly different assumptions about the mean–variance relationship. Rather than selecting a single ``best" model, we explicitly compare the resulting confidence intervals to assess their coherence under potential misspecification. We contrast our proposed nonparametric confidence intervals with classical Wald-type intervals constructed using (i) the model-based covariance estimator and (ii) the sandwich covariance estimator, the latter being asymptotically robust but still dependent on correct specification of the mean structure.

The distribution of the amplitudes is shown in \cref{fig:density_log_amplitudes}. The distributions of the sandwich- and flip-based intervals are very similar to each other, while the Wald-type intervals with classical covariance are different. The model that assumes a Poisson distribution is the one that suffers the most from the wrong specification and produces very short confidence intervals that do not ensure nominal coverage (as shown in the simulations). This behavior is consistent with the simulations in \cref{sec:simul} and with well-documented overdispersion in gene expression data: when the Poisson variance assumption is violated, model-based Wald intervals tend to underestimate uncertainty and can lead to overly optimistic conclusions \citep{oberg_technical_2012,nemcova_unjustified_2025}.

    
\begin{figure}
    \centering
    \includegraphics[width=0.6\linewidth]{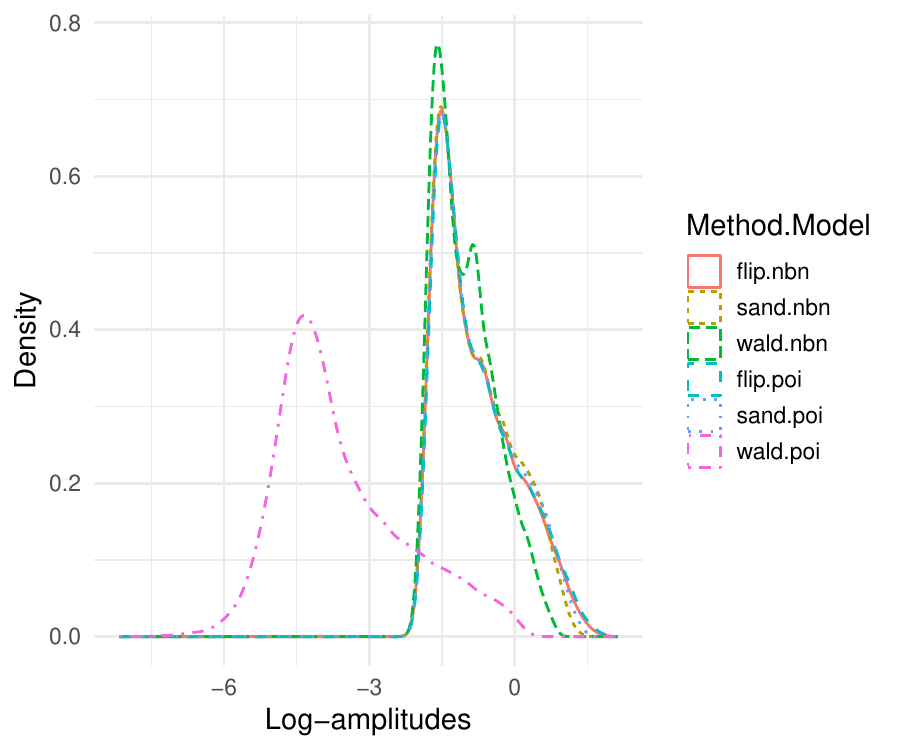}
    \caption{Distributions of the amplitudes of the confidence intervals according to construction method and model specification.}
    \label{fig:density_log_amplitudes}
\end{figure}

From now on, we focus on the comparison between the proposed flip-based intervals and the sandwich-based ones. \cref{fig:amplitude_scatter} compares the amplitudes of such confidence intervals for each gene. The flip-based intervals are generally wider, as also observed in the simulations, because they ensure nominal coverage. The effect appears to be more evident the larger the intervals are, that is, the more uncertainty associated with the estimated parameter.

    
\begin{figure}
    \centering
        \includegraphics[width=\linewidth]{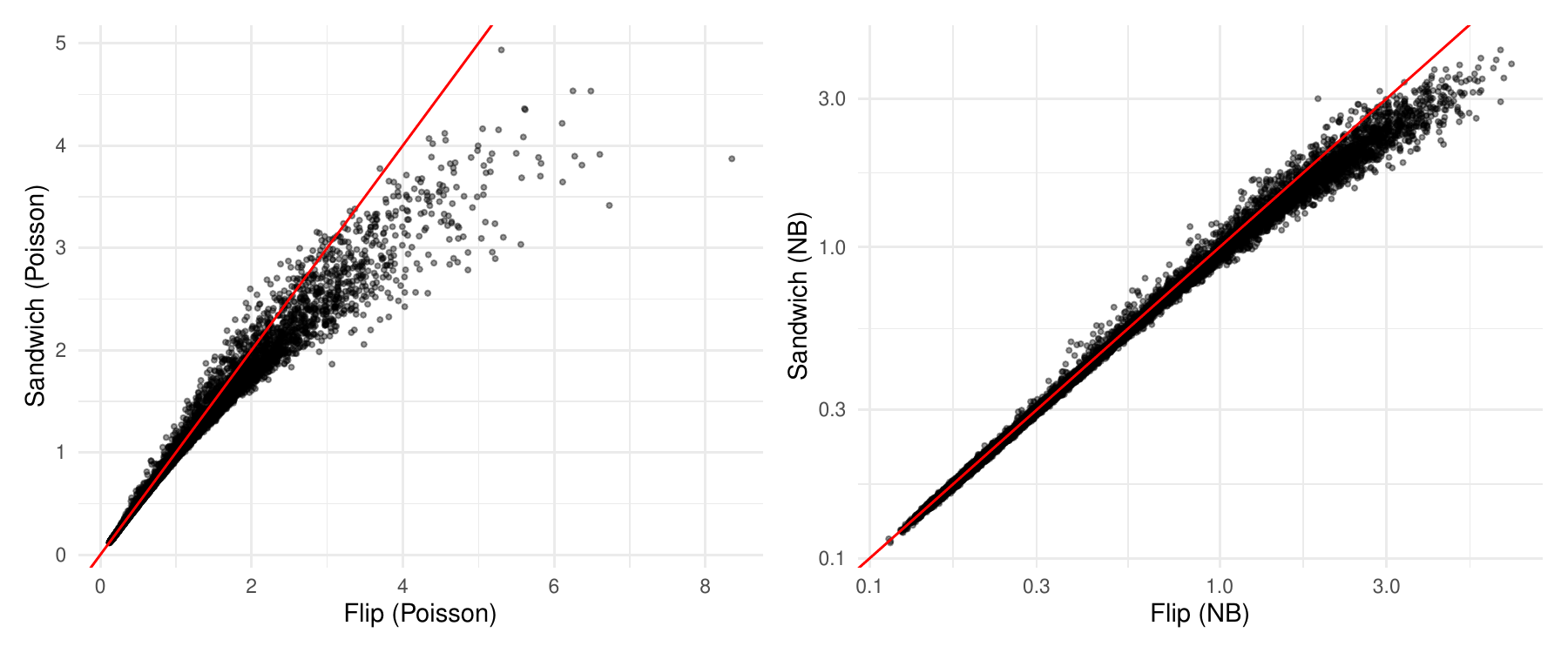}
    \caption{Amplitude comparison between flip and sandwich-based intervals for each gene.}
    \label{fig:amplitude_scatter}
\end{figure}

As a measure of stability of the constructed confidence intervals, for each method, we consider the overlap between the intervals built under Poisson and the negative binomial model specifications. The overlap is defined as
\begin{equation}
    \frac{2\max\{0,\ \min(\beta_U^{(NB)}, \beta_U^{(Poi)}) - \max(\beta_L^{(NB)}, \beta_L^{(Poi)})\}}{(\beta_U^{(NB)} - \beta_L^{(NB)})+(\beta_U^{(Poi)} - \beta_L^{(Poi)})}
    \label{eq:def_overlap}
\end{equation}
and ranges between 0 (in case of no overlap) and 1 (exact same intervals). The direct comparison of the overlaps between the flip and sandwich-based intervals is shown in \cref{fig:overlap_scatter}. We see that most of the points lie below the diagonal, underscoring how the flip-based method produces more stable intervals, in general.

    
\begin{figure}
    \centering
    \includegraphics[width=0.6\linewidth]{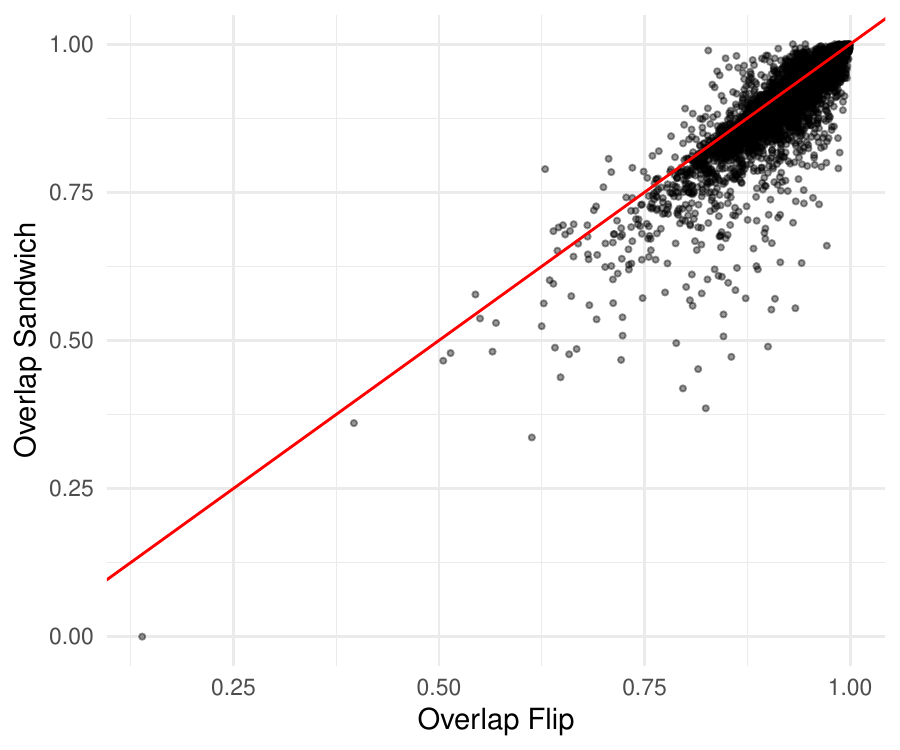}
    \caption{Comparison of the overlaps between the flip and sandwich-based methods for each gene.}
    \label{fig:overlap_scatter}
\end{figure}

A possible criticism might be that the overlap is larger for flip-based methods simply because the produced intervals are in general wider. \cref{fig:overlap_vs_length} shows the comparison of the overlaps as the lengths of the intervals vary, contesting this objection.

    
\begin{figure}
    \centering
    \includegraphics[width=0.6\linewidth]{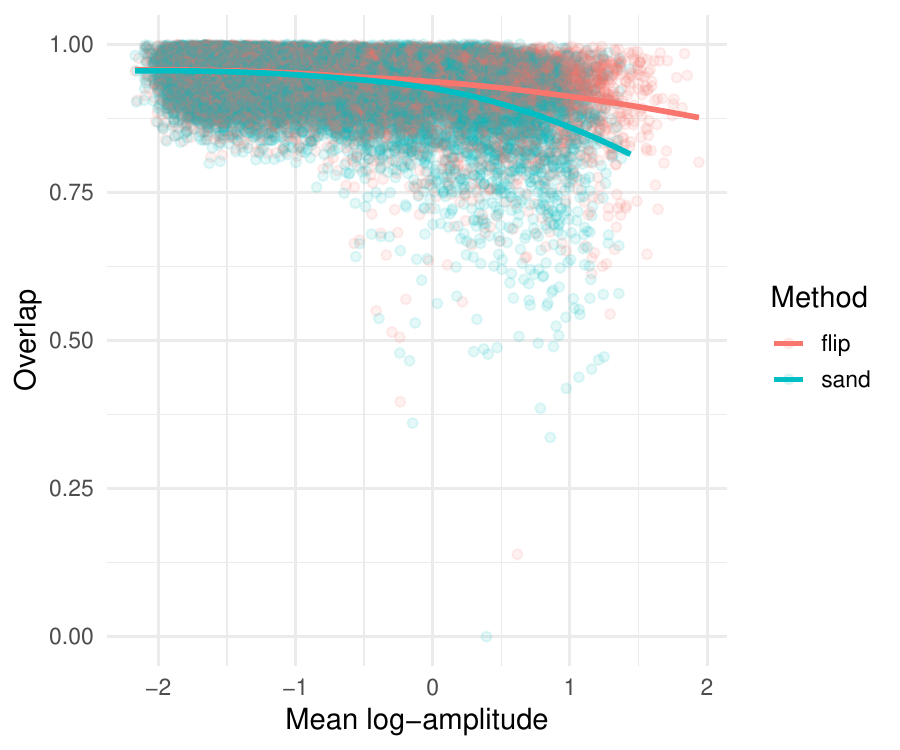}
    \caption{Overlap between the confidence intervals built with Poisson and negative binomial specifications, against the mean interval width of the two, using flip and sandwich-based methods. The overlap is larger for the flip-based method at all lengths.}
    \label{fig:overlap_vs_length}
\end{figure}

The observed coherence suggests that the proposed approach effectively mitigates the impact of variance misspecification, providing reliable uncertainty quantification without requiring correct specification of the full distribution.

\section{Discussion}
In this paper, we proposed a semi-parametric approach for constructing confidence intervals in generalized linear models. The method provides a proper way to obtain uncertainty quantification that is robust to variance misspecification, while retaining the familiar modeling framework of GLMs. We established theoretical guarantees for the validity of the procedure and complemented these results with simulation studies and a real data application. Both empirical investigations highlight that the proposed intervals maintain nominal coverage under various forms of heteroskedasticity and overdispersion, whereas classical parametric confidence intervals can exhibit substantial undercoverage when model assumptions are violated. By reducing sensitivity to arbitrary modeling choices, the method offers researchers the opportunity to provide reliable uncertainty inference, rather than point estimation alone, even in applications where modeling assumptions are difficult to verify, such as high trough-put genomics, limiting the possible damage induced by modeling mistakes.

The robustness of the proposed method comes directly from the properties of the sign-flip score tests \citep{hemerik2020robust,de2025inference}. As a consequence, the procedure relies on the correct specification of the mean structure. Extending the methodology to settings with certain forms of mean misspecification represents an interesting direction for future research. This would extend our range outside the field of glms, and more complex regression models tailored to specific applications could be incorporated, with the goal of preserving the robustness properties of the underlying tests.  In particular, mixed models \citep[see][]{andreella2025robust} and zero-inflated models, commonly used in the analysis of gene expression data, could be of great interest.

The present work focuses on inference for a single scalar parameter, avoiding so issues related to multiplicity. Extending the framework to simultaneous inference, for instance, through the construction of joint confidence regions or multiple confidence intervals, constitutes another natural direction for further investigation. Inevitably, in high-dimensional settings the loss of power may result in overly wide and uninformative intervals, but since the asymptotic properties of sign-flip score tests still hold for multivariate inference, the extension to moderately sized parameter sets is promisingly solid.

The proposed methodology is available through the R package \texttt{flipscores}, which implements the sign-flip testing procedures of \cite{hemerik2020robust} and \cite{de2025inference} and is available on Github. Confidence intervals can be obtained directly by applying the standard \texttt{confint} method to objects of class \texttt{flipscores}, facilitating the integration of the method into existing analysis workflows.


\section*{Acknowledgements}

This research was carried out with the support of a research grant funded by the University of Padova - Department of Statistical Sciences, under the BIRD 2023 funding scheme. The author Riccardo De Santis acknowledges the following funding, from the Italian Ministry of University and Research, PRIN 2022 - Project No.2022FZY9PM - CUP C53C24000740006. Generative AI tools (GPT-5) were used in the writing of this paper in the form of language improvement tools and coding assistants to produce clearer figures.\vspace*{-8pt}


\section*{Supporting Information}

Web Appendices are available with this paper.
Appendix A contains the proofs of Theorems~2~and~3 and Corollary~1. Appendix B provides the pseudocode to the bisection algorithms described in Section~3.2. Appendix C adds images and their interpretation to the application in Section~5.
\vspace*{-8pt}

\bibliography{references}
    
\clearpage
\appendix
\renewcommand{\theequation}{S.\arabic{equation}}
\renewcommand{\thefigure}{S.\arabic{figure}}
\renewcommand{\thetable}{S.\arabic{table}}
\setcounter{page}{1}

\section*{Supplementary to ``Robust confidence intervals for generalized linear models"}
\section{Proofs to theorems}
\subsection{Proof to Theorem 2}
\setcounter{theorem}{1}
\begin{theorem}
    Assume that the observations are independent and generated by the linear model 
    \[
        Y_i = x_i\beta + \mathbf{z}_i^T \boldsymbol{\gamma} + \varepsilon_i\,,\quad \varepsilon_i \sim \mathcal{N}(0, \sigma^2)\, ,
    \]
    for $i = 1,\ldots,n$.
    Let $\hat{\beta}_{\text{obs}}$ be the maximum likelihood estimate of $\beta$ and let $\beta_A < \beta_B < \hat{\beta}_{\text{obs}}$. Let $p_k$, $k=A,B$, be the p-values associated with the effective flipscores tests $H_0^{(k)}:\beta=\beta_k$ versus $H_1^{(k)}: \beta > \beta_k$, for $k = A,B$, respectively, assuming that the tests share the same set of flip matrices. Then, $p_A\leq p_B$.
\end{theorem}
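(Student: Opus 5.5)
The plan is to write the effective flipscores statistic explicitly in the linear model and show that, for every flip matrix, the difference between the observed and the flipped statistic is a non-increasing affine function of the null value $\beta_0$. The ordering of the p-values then follows by counting, because the two tests share the same flips.

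\emph{Step 1: the statistic under the linear model.} In the Gaussian linear model with identity link, $D=I$, $V=\sigma^2 I$ and $W=\sigma^{-2}I$. Hence $H=Z(Z^TZ)^{-1}Z^T=:P_Z$, which does not depend on $\beta_0$. Under $H_0:\beta=\beta_0$, the null model is fitted with offset $x_i\beta_0$. Therefore $\hat{\boldsymbol{\gamma}}(\beta_0)$ is the least-squares coefficient of $Y-X\beta_0$ on $Z$, and $Y-\hat{\boldsymbol\mu}=(I-P_Z)(Y-X\beta_0)$. Set $\tilde x=(I-P_Z)X$ and $e=(I-P_Z)Y$. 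By symmetry and idempotence of $I-P_Z$, \eqref{eq:eff_score_stat} becomes
\[
S^{(\beta_0)}(F_g)=c\,\tilde x^T F_g\,(e-\beta_0\tilde x),
\qquad c=n^{-1/2}\sigma^{-2}>0 .
\]
If $\sigma$ is replaced by an estimate that depends on $\beta_0$, $c$ changes between the two tests. It is still a positive factor common to all flips within a single test, so it does not affect any within-test comparison. The main thing to verify carefully is that nothing else in the statistic depends on $\beta_0$; in particular, the weights and the hat matrix must not. This holds precisely because the variance does not depend on the mean, and it is what fails for general GLMs.

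\emph{Step 2: monotone comparison for each flip.} For a fixed $F_g$, define
\[
\Delta_g(\beta_0)=S^{(\beta_0)}(I)-S^{(\beta_0)}(F_g)=c\left[\tilde x^T(I-F_g)e-\beta_0\,\tilde x^T(I-F_g)\tilde x\right].
\]
The matrix $I-F_g$ is diagonal with entries in $\{0,2\}$, so $\tilde x^T(I-F_g)\tilde x=2\sum_{i:(F_g)_{ii}=-1}\tilde x_i^2\ge 0$. Thus the bracket is non-increasing in $\beta_0$, and $\beta_A<\beta_B$ gives that the bracket at $\beta_A$ is at least the bracket at $\beta_B$. Since $c>0$ in each test, the signs of $\Delta_g$ are ordered accordingly:
\[
S^{(B)}(F_g)<S^{(B)}(I)\ \Rightarrow\ S^{(A)}(F_g)<S^{(A)}(I),
\qquad
S^{(A)}(F_g)\ge S^{(A)}(I)\ \Rightarrow\ S^{(B)}(F_g)\ge S^{(B)}(I).
\]

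\emph{Step 3: counting.} For the right-sided alternative, the p-value is the proportion of flips (including the identity) whose statistic is at least as extreme as the observed one, that is $\bigl(1+\#\{g\ge 2: S(F_g)\ge S(I)\}\bigr)/w$ or the $(w+1)$-normalized analogue. I read the $r$ in the paper in this sense, up to the direction of the indicator. By Step 2 the flips counted for $\beta_A$ form a subset of those counted for $\beta_B$. Because the flip set is shared and the normalization is identical, $p_A\le p_B$.

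Two remarks on the hypotheses. Ties are handled consistently because the same non-strict inequality is used in both tests. The assumption $\beta_B<\hat\beta_{\text{obs}}$ is not actually needed for the inequality; it only places us on the branch where the lower bound is sought.
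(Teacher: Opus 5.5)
Your proof is correct and is essentially the paper's argument. Both reduce the claim to the per-flip implication $S^{(B)}(F_g)<S^{(B)}(I)\Rightarrow S^{(A)}(F_g)<S^{(A)}(I)$, using that $H=Z(Z^TZ)^{-1}Z^T$ does not depend on $\beta_0$ and that $X^T(I-H)(I-F_g)(I-H)X\ge 0$, with the $\beta_0$-dependent variance estimate entering only as a positive factor within each test. Writing $S^{(\beta_0)}(I)-S^{(\beta_0)}(F_g)$ as an affine, non-increasing function of $\beta_0$ is simply a cleaner packaging of the paper's chain of inequalities involving the ratio $\hat\sigma^2_B/\hat\sigma^2_A$.
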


\begin{proof}
    We fix $w$ flip matrices and test the null hypothesis $H_0^{(A)}:\ \beta = \beta_A$ versus the one-sided alternative $H_1^{(A)}: \beta > \beta_A$. We call $w_A$ the number of flip matrices that lead to a score that is lower than the observed one, that is, the test p-value will be $p_A = 1 - w_A / w$. Analogously, when testing the null hypothesis $H_0^{(B)}:\ \beta = \beta_B$, we will have $p_B$ and $w_B$. In order to have monotonicity, we want $p_A$ not to be larger than $p_B$, which is equivalent to proving $w_A \geq w_B$.
    
     Let $S^A(F_g)$ and $S^B(F_g)$ be the effective score related to the fixed flip matrix $F_g$ and, respectively, to the test with null $\beta_A$ and $\beta_B$. The observed scores are $S^A(I)$ and $S^B(I)$, where $I$ is the identity matrix of size $n$. It is sufficient to prove that if a flip matrix $F_g$ is such that $S^B(F_g) < S^B(I)$, then $S^A(F_g) < S^A(I)$. 
     In linear models, we can write $\hat{\boldsymbol{\mu}}$ of \cref{eq:eff_score_stat}, under the null hypothesis $\beta = \beta_k,\ k = A,B$, as $X\beta_k + H(\mathbf{y} - X\beta_k)$, where $H = Z\left(Z^TZ\right)^{-1}Z^T$ does not depend on the null hypothesis.
     In linear models, under the null hypothesis $\beta = \beta_k,\ k = 0,1$, it holds that $\hat{\boldsymbol{\mu}}_k=X\beta_k + H(\mathbf{y} - X\beta_k)$, where $H = Z\left(Z^TZ\right)^{-1}Z^T$ does not depend on the null hypothesis. Substituting $\hat{\boldsymbol{\mu}}_k$ in \cref{eq:eff_score_stat} and neglecting the constant term $n^{-1/2}$, which is common to all scores, we have that 
    \begin{align*}
        S^k(F_g) =&\, X^\top W_k^{1/2} (I - H) F_g V_k^{-1/2} (I - H)  Y \\
        &- X^\top W_k^{1/2} (I - H) F_g V_k^{-1/2}(I-H) X \beta_k\,.
    \end{align*}
    The matrices $W_k=V_k^{-1}$ are of the form $\hat{\sigma}^2_k I$, where $\hat{\sigma}^2_k$ is the variance estimated under the null and does not depend on the flip. We have then
\begin{align*}
    S^B(F_g) =& \frac{1}{\hat{\sigma}^2_B}\Big[ X^\top (I - H) F_g (I - H) Y \\
    &- X^\top (I - H) F_g (I-H) X \beta_B\Big]\,,\\
    S^B(I) =&\frac{1}{\hat{\sigma}^2_B}\left[ X^\top (I - H) Y - X^\top (I - H) X \beta_B\right]\,,
\end{align*}
and
\begin{align*}
    S^A(I) &= \frac{1}{\hat{\sigma}^2_1}\left[X^\top (I - H) Y - X^\top (I - H) X \beta_A\right] \\
    & = \frac{\hat{\sigma}^2_B}{\hat{\sigma}^2_A} S^B(I) - \frac{1}{\hat{\sigma}^2_A} X^\top (I - H) X \left(\beta_A - \beta_B\right)\,.
\end{align*}
Then, 
\begin{align*}
    S^A(F_g) =&\,\frac{1}{\hat{\sigma}^2_A}\Big[ X^\top (I - H)F_g(I - H) Y\\
    & \qquad \ - X^\top (I - H) F_g (I-H) X \beta_A\Big] \\
     =&\, \frac{\hat{\sigma}^2_B}{\hat{\sigma}^2_A} S^B(F_g) - \frac{1}{\hat{\sigma}^2_A} X^\top (I - H) F_g(I-H) X \big(\beta_A - \beta_B\big) \\
    < &\,\frac{\hat{\sigma}^2_B}{\hat{\sigma}^2_A} S^B(I) - \frac{1}{\hat{\sigma}^2_A} X^\top (I - H) F_g (I-H) X \big(\beta_A - \beta_B\big) \\
    = & \, S^A(I) + \frac{1}{\hat{\sigma}^2_A} X^\top (I - H) X \big(\beta_A - \beta_B\big) \\
    & \qquad \quad- \frac{1}{\hat{\sigma}^2_A} X^\top (I - H) F_g  (I-H) X \big(\beta_A - \beta_B\big) \\
    = &\, S^A(I) + \frac{1}{\hat{\sigma}^2_A} X^\top (I - H) (I - F_g) (I-H) X \big(\beta_A - \beta_B\big),
\end{align*}
using the hypothesis of $S^B(F_g) < S^B(I)$ and writing the idempotent matrix $(I-H)$ as $(I - H) I (I-H)$ for the last equivalence. The matrix $I-F_g$ is diagonal with only values of $0$ or $2$, and the product $(I - H) (I - F_g) (I - H)$ is positive semidefinite, being symmetric and with all non-negative eigenvalues. The quadratic form $X^\top (I - H) (I - F_g) (I - H) X$ is then non-negative and, since $\beta_A - \beta_B < 0$, the last addendum is non-positive, so $S^A(F_g) < S^A(I)$.
\end{proof}

\subsection{Proof to Corollary 1}
\setcounter{corollary}{0}
\begin{corollary}
    Assume that the observations are independent and generated by the linear model 
    \[
        y_i = x_i \beta + \mathbf{z}_i^T \boldsymbol{\gamma} + \varepsilon_i\,,\quad \varepsilon_i \sim \mathcal{N}(0, \sigma^2)\, ,
    \]
    for $i=1,\ldots,n$. 
    Let $\hat{\beta}_{\text{obs}}$ be the estimate of $\beta$ and let $\beta_A < \beta_B < \hat{\beta}_{\text{obs}}$. Let $p_k$, $k=A,B$, be the p-values associated with the standardized flipscore tests $H_0^{(k)}:\beta=\beta_k$ versus $H_1^{(k)}: \beta > \beta_k$, for $k = A,B$, respectively. Then, as $n\rightarrow\infty$, it holds that $p_A\leq p_B$.
\end{corollary}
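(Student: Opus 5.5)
The plan is to reduce the corollary to Theorem 2 by showing that, asymptotically, standardization by $\mathbb{V}(S(F_g))^{1/2}$ does not change the ordering between a flipped statistic and the observed one. As in the proof of Theorem 2, write $p_k = 1 - w_k/w$. Here $w_k$ counts the flips $F_g$ with $S^{*k}(F_g) < S^{*k}(I)$, where $S^{*k}(F_g)=S^k(F_g)/\mathbb{V}(S^k(F_g))^{1/2}$. It then suffices to show that, for every fixed $F_g$, the event $S^{*B}(F_g)<S^{*B}(I)$ implies $S^{*A}(F_g)<S^{*A}(I)$ with probability tending to one. Since there are finitely many ($w$) flips, this yields $w_A\ge w_B$, and hence $p_A\le p_B$, with probability tending to one.

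\textbf{Step 1 (variance depends on the flip but not on the null).} In the linear model $W_k=\hat\sigma_k^{-2}I$ and $H$ does not depend on $\beta_k$. The variance is therefore
\[
\mathbb{V}(S^k(F_g)) = \hat\sigma_k^{-2}\, n^{-1}X^\top (I-H)F_g(I-H)F_g(I-H)X .
\]
Its flip-dependent factor $v(F_g):=n^{-1}X^\top (I-H)F_g(I-H)F_g(I-H)X$ is common to both nulls. Only the scalar $\hat\sigma_k^{-2}$ differs, and it cancels in the same way from every statistic belonging to a given test. Consequently the comparison $S^{*k}(F_g)<S^{*k}(I)$ is equivalent to
\[
S^k(F_g) < \big(v(F_g)/v(I)\big)^{1/2} S^k(I).
\]

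\textbf{Step 2 (asymptotic equality of variances).} I invoke \eqref{eq:asy_equi_var_flipscores}, which gives $v(I)-v(F_g)\to 0$. Assuming, as in de Santis et al., that $v(I)$ is bounded away from zero, this implies $v(F_g)/v(I)\to1$. The standardized comparison therefore differs from the effective-score comparison of Theorem 2 by a vanishing multiplicative factor $c_n=(v(F_g)/v(I))^{1/2}\to1$.

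\textbf{Step 3 (transfer of Theorem 2).} The chain of inequalities in the proof of Theorem 2 gives
\[
S^A(F_g) - S^A(I) = \tfrac{\hat\sigma_B^2}{\hat\sigma_A^2}\big(S^B(F_g)-S^B(I)\big) + \tfrac{1}{\hat\sigma_A^2}X^\top(I-H)(I-F_g)(I-H)X(\beta_A-\beta_B).
\]
The second term is nonpositive. Replacing $S^k(I)$ by $c_nS^k(I)$ introduces an error $(c_n-1)S^k(I)$ in each of the two comparisons. The main obstacle is to show that these errors cannot reverse the inequality, since Theorem 2 only gives a weak margin when the quadratic form $X^\top(I-H)(I-F_g)(I-H)X$ is small.

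To handle this, I would argue that $S^k(I)$ and $S^k(F_g)$ are $O_p(1)$ under the scaling $n^{-1/2}$, by the CLT arguments in de Santis et al. Then $(c_n-1)S^k(I)=o_p(1)$. Next, the event that $S^{*B}(F_g)-S^{*B}(I)$ falls in a shrinking neighborhood of $0$ has vanishing probability, because the limiting joint distribution of the pair of statistics is continuous (Gaussian) with no atom on the diagonal. On the complement of that event, the $o_p(1)$ perturbations cannot flip the sign. Combined with the nonpositive extra term, this gives $S^{*A}(F_g)<S^{*A}(I)$ with probability tending to one.

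Taking a union over the $w$ flips concludes that $p_A\le p_B$ as $n\to\infty$.
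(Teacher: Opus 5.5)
Your Step 3 has a genuine gap. You control the two errors $(c_n-1)S^A(I)$ and $(c_n-1)S^B(I)$ separately by asserting $S^k(I)=O_p(1)$. That holds only if $\beta_k$ lies within $O(n^{-1/2})$ of the true $\beta$, because
$S^k(I)=n^{-1/2}\hat\sigma_k^{-2}\{X^\top(I-H)X(\beta-\beta_k)+X^\top(I-H)\varepsilon\}$.
For fixed $\beta_A<\beta_B$, at least one observed statistic therefore carries a drift of order $\sqrt n$. Meanwhile \eqref{eq:asy_equi_var_flipscores} gives only $c_n\to1$, with no rate, so $(c_n-1)S^k(I)=o_p(1)$ is not established.

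The anti-concentration step is also weakest exactly where you located the difficulty. The limit of $S^{*B}(F_g)-S^{*B}(I)$ is nondegenerate only if $n^{-1}X^\top(I-H)(I-F_g)(I-H)X$ stays bounded away from zero. This fails for flips that differ from $I$ in few coordinates, which is precisely the small-margin case.

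The fix is not to split the errors. Set $\rho=\hat\sigma_B^2/\hat\sigma_A^2$ and use $\rho S^B(I)-S^A(I)=\hat\sigma_A^{-2}n^{-1/2}X^\top(I-H)X(\beta_A-\beta_B)$. Your identity then becomes
\[
S^A(F_g)-c_nS^A(I)=\rho\bigl(S^B(F_g)-c_nS^B(I)\bigr)+\frac{\beta_A-\beta_B}{\hat\sigma_A^{2}\sqrt n}\bigl(c_n X^\top(I-H)X-X^\top(I-H)F_g(I-H)X\bigr).
\]
Everything now hinges on the sign of a non-random coefficient, and no $O_p$ bounds or anti-concentration are needed.

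Up to positive factors (including $\mathbb{V}_{F_g}^{1/2}$), this coefficient is the paper's $X^\top(I-H)\bigl(I/\mathbb{V}_I^{1/2}-F_g/\mathbb{V}_{F_g}^{1/2}\bigr)(I-H)X$. The paper keeps the variance ratio on this drift term rather than on the random observed statistic. It then argues via \eqref{eq:asy_equi_var_flipscores} that the coefficient is asymptotically a positive multiple of the positive semidefinite form in $I-F_g$.

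You can even make the sign exact. With $u=(I-H)X$, one has $c_nX^\top(I-H)X=\|u\|\,\|(I-H)F_gu\|$ and $X^\top(I-H)F_g(I-H)X=\langle u,(I-H)F_gu\rangle$. Cauchy--Schwarz then makes the coefficient nonnegative for every $n$, so $S^{*B}(F_g)<S^{*B}(I)$ implies $S^{*A}(F_g)<S^{*A}(I)$ deterministically.
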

\begin{proof}
    We can follow the same proof of \cref{thm:monot_lm_eff}, substituting any $S^k(F_g)$ with the corresponding standardized version ${S^{k}}^*(F_g)$ defined in \cref{eq:def_standardization}. We should notice that in linear models $\mathbb{V}(S^B(F_g)) =\mathbb{V}(S^A(F_g))=:\mathbb{V}_{F_g}$. Assuming ${S^{B}}^*(F_g) < {S^{B}}^*(I)$, we that ${S^{A}}^*(F_g)$ is less than
    \[
        {S^A}^*(I)+ \frac{1}{\hat{\sigma}^2_A} X^\top (I - H) \left(\frac{I}{\mathbb{V}_I^{1/2}} - \frac{F_g}{\mathbb{V}^{1/2}_{F_g}}\right) (I-H) X \left(\beta_A - \beta_B\right).
    \]
    From \cref{eq:asy_equi_var_flipscores} and with the same reasoning as at the end of the proof of \cref{thm:monot_lm_eff}, we obtain ${S^{A}}^*(F_g) \lesssim {S^{A}}^*(I)$.
\end{proof}
\subsection{Proof to Theorem 3}
\begin{theorem}
Assume that the observations are independent and generated from a generalized linear model defined by Equations (\ref{eq:model})--(\ref{eq:glm_link}). Let $\hat{\beta}$ denote the estimate of $\beta$ and let $\beta_A < \beta_B < \hat{\beta}$. Let $S^{(k)}(F_g)$ denote the effective or standardized sign-flip score statistic computed under $H_0^{(k)}:\beta=\beta_k$ with flip matrix $F_g$, and let $S^{(k)}(I)$ denote the observed statistic (identity flip). 
Then, for any fixed flip matrix $F_g$,
\begin{equation*}
    \Pr\left(S^{(A)}(F_g) < S^{(A)}(I)\mid S^{(B)}(F_g) < S^{(B)}(I)\right) \;\longrightarrow\;1
\end{equation*}
as $n\to\infty$. Consequently, the sign-flip p-value for testing $H_0^{(A)}$ is asymptotically no larger than that for testing $H_0^{(B)}$.
\end{theorem}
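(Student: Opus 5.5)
The plan is to reduce the GLM case to the linear-model argument of \cref{thm:monot_lm_eff} by linearizing the null-model fitted quantities around a common reference point. This turns the effect of moving the null from $\beta_B$ to $\beta_A$ into an additive shift controlled by a positive semidefinite quadratic form, plus negligible remainders.

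\textbf{Step 1: linearize the residuals.} Write $\hat{\boldsymbol\mu}_k$, $W_k$, $V_k$ and $H_k$ for the null-model quantities under $H_0^{(k)}$, $k=A,B$, where $\beta_k$ enters as an offset. Under the regularity assumptions of \cite{de2025inference}, the constrained estimates $\hat{\boldsymbol\gamma}_k$ are smooth functions of the offset. The plan is to expand $\hat{\boldsymbol\mu}_A$ around $\hat{\boldsymbol\mu}_B$ to first order in $\beta_A-\beta_B$. The standard one-step relation $\hat{\boldsymbol\gamma}_A-\hat{\boldsymbol\gamma}_B \approx -(Z^TW Z)^{-1}Z^TW X(\beta_A-\beta_B)$ then gives
\[
V^{-1/2}(Y-\hat{\boldsymbol\mu}_A)\approx V^{-1/2}(Y-\hat{\boldsymbol\mu}_B) - (I-H)W^{1/2}X(\beta_A-\beta_B).
\]
This is the exact analogue of the identity $\hat{\boldsymbol\mu}_k = X\beta_k + H(Y-X\beta_k)$ used in the linear case. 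Here $W$, $V$ and $H$ are evaluated at a common point, and the remainders are $O_p$ of smaller order after the $n^{-1/2}$ scaling. The weights $W_A,W_B$ (and $V_A,V_B$, $H_A,H_B$) differ by $O(|\beta_A-\beta_B|)$, and their effect on the statistic is of lower order than the leading shift term. I would treat $\beta_A,\beta_B$ either as fixed with a local (Taylor/mean-value) argument, or as a local sequence; I expect the cleanest route is to work with differences scaled by $n^{-1/2}$.

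\textbf{Step 2: repeat the algebra of \cref{thm:monot_lm_eff}.} Substituting the linearization into \eqref{eq:eff_score_stat} gives
\[
S^{(A)}(F_g) = c\,S^{(B)}(F_g) - n^{-1/2}X^TW^{1/2}(I-H)F_g(I-H)W^{1/2}X(\beta_A-\beta_B)+R_n(F_g),
\]
with $c>0$ a ratio of dispersion or scaling factors, and similarly for $F_g=I$. On the conditioning event $S^{(B)}(F_g)<S^{(B)}(I)$ this yields
\[
S^{(A)}(F_g)-S^{(A)}(I) < n^{-1/2}X^TW^{1/2}(I-H)(I-F_g)(I-H)W^{1/2}X(\beta_A-\beta_B)+R_n(F_g)-R_n(I).
\]
As before, $(I-H)(I-F_g)(I-H)\succeq 0$ and $\beta_A-\beta_B<0$, so the main term is nonpositive.

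\textbf{Step 3: control the remainders (the main obstacle).} The delicate point is that the main term can be zero or small, for instance when $F_g$ flips few coordinates. It is therefore not enough to show that the remainders vanish; I need the conditional probability of the bad event to vanish. My first attempt would be to show that, for a fixed random flip, the quadratic form $n^{-1}X^TW^{1/2}(I-H)(I-F_g)(I-H)W^{1/2}X$ converges to a strictly positive limit, roughly half of $n^{-1}\mathcal I_{\beta\beta\cdot\gamma}$, by a law of large numbers over the random signs. This makes the main shift of exact order $n^{1/2}|\beta_A-\beta_B|$, which dominates $R_n(F_g)-R_n(I)=o_p(n^{1/2})$. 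For the standardized statistic I would invoke \eqref{eq:asy_equi_var_flipscores}, as in Corollary~1, so that the standardizations under $A$ and $B$ and across flips are asymptotically equal and only add $o_p(1)$ relative perturbations. The conditioning event has probability bounded away from zero asymptotically, by the limiting distribution in Theorem~1 or under the alternative. The ratio of $\Pr(\text{bad}\cap\text{event})\to0$ to $\Pr(\text{event})$ therefore tends to zero, which gives the conditional probability limit of $1$.

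\textbf{Step 4: conclude for the p-values.} Applying this to each of the finitely many $w$ flips and taking a union bound, with high probability every flip counted for $B$ is also counted for $A$. This gives $w_A\ge w_B$, and hence $p_A\le p_B$, with probability tending to one.
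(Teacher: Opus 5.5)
Your route differs from the paper's, and it is sound in outline. The paper never couples the two null fits. It expands the score around $\hat\beta$, so that for $\beta_k$ within $O(n^{-1/2})$ of $\hat\beta$ the observed standardized statistic is a linear drift in $\beta_k-\hat\beta$ with asymptotically equal slopes, plus noise. By the multiplier CLT each flipped statistic is asymptotically $N(0,1)$ with no drift, and the effective score is then handled through \cref{eq:asy_equi_var_flipscores}. This is short, but it only compares marginal behaviours. In the local regime the observed statistics are themselves $O_p(1)$, so carrying the event $\{S^{(B)}(F_g)<S^{(B)}(I)\}$ over to $A$ tacitly needs $S^{(A)}(F_g)-S^{(B)}(F_g)\to0$ jointly. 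Your linearization of the null fits supplies exactly this coupling, flip by flip. It reproduces the algebra of \cref{thm:monot_lm_eff}, with the gap given by the nonnegative form $n^{-1/2}\tilde x^\top(I-F_g)\tilde x\,(\beta_A-\beta_B)$, where $\tilde x=(I-H)W^{1/2}X$. Your law of large numbers $n^{-1}\tilde x^\top F_g\tilde x\to0$ is the paper's ``flipped statistics carry no drift'' in another guise. You also treat the conditioning properly, bounding the joint probability of the bad event and dividing by a probability bounded away from zero. And you derive the p-value ordering by a union bound over the $w$ flips, which the paper only asserts. The price is remainder control for the $b$-dependence of $W$, $V$, $H$ and of the standardization, which the linear case does not have.

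That control is where Step 3 must be restated. With $\Delta=\beta_A-\beta_B$, the claim ``a shift of order $n^{1/2}|\Delta|$ dominates $o_p(n^{1/2})$'' fails as written in both regimes.
\begin{itemize}
\item \emph{Local nulls.} The shift is $O(1)$, so what you need is $R_n(F_g)-R_n(I)=o_p(n^{1/2}|\Delta|)$. This does hold. Every remainder term that is first order in $\Delta$ multiplies either a sign-flipped sum or a null-fit residual sum, both $O_p(1)$ after the $n^{-1/2}$ scaling. These terms come from $\partial_b W$, $\partial_b V$, $\partial_b H$, the residual correction to $\hat{\boldsymbol\gamma}_A-\hat{\boldsymbol\gamma}_B$, and the flip-dependent standardization. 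The second-order terms are $O_p(n^{1/2}\Delta^2)$. In mean-value form the gap is then $(\beta_A-\beta_B)\{n^{-1/2}\tilde x^\top(I-F_g)\tilde x+O_p(1)\}$. With uniform control in $b$, this even covers separations $o(n^{-1/2})$, where the paper's $o_p(1)$ drift comparison says nothing.
\item \emph{Fixed, separated $\beta_A,\beta_B$.} Here the remainder is generically of exact order $n^{1/2}$, coming from the Taylor term $n^{1/2}\Delta^2$ and from terms of order $n^{1/2}|\Delta|\,|\beta-\beta_k|$ produced by non-zero residual means. So linearization cannot carry this case. The claim follows instead from the divergence of $S^{(k)}(I)$ at rate $n^{1/2}$ against $O_p(1)$ flipped statistics, which is essentially the paper's argument.
\end{itemize}
Two minor points. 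Since $\mathbb{E}(1-f_i)=1$, $n^{-1}\tilde x^\top(I-F_g)\tilde x$ tends to the full $\lim n^{-1}\tilde x^\top\tilde x$, not half of it; this is harmless for positivity. And the law of large numbers requires $F_g$ drawn uniformly and independently of the data, which is the same implicit reading as the paper's multiplier CLT.
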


\begin{proof}
Let $S(\beta)$ denote the score and $I_n(\beta)$ the Fisher information. By the score equation $S(\hat{\beta}_n)=0$ and the local asymptotic linearity assumption, we obtain the expansion
\begin{equation*}
    S(\beta) = -(\hat{\beta}_n-\beta) I_n(\beta_0) + o_p(\sqrt n).
\end{equation*}
Since $I_n(\beta_0)=O(n)$, the standardized score statistic (\ref{eq:def_standardization}) satisfies, for $k = A,B$,
\begin{equation*}
    {S^k}^*(I) = c_k \sqrt n\,(\beta_k-\hat{\beta}_n) + W_n + o_p(1),
\end{equation*}
where $W_n \Rightarrow N(0,1)$ and $c_k>0$ is a constant. In particular, $c_k = {I_n(\beta_0)}/{{I_n(\beta_k)}^{1/2}}$. As $n$ grows, $\hat{\beta}_n\rightarrow\beta_0$, so $I_n(\beta)/n \rightarrow I(\beta)$ and we obtain $c_k = {n}^{1/2}{I(\beta_0)}/{{I(\beta_k)}^{1/2}} + o({n}^{1/2})$. Consider now $\beta_k$ in a local neighborhood of $\hat{\beta}$, say $\beta_k = \hat{\beta} + O(n^{-1/2})$. Then, $I(\beta_k) = I(\beta_0) + o(1)$, so $c_k = {n}^{1/2}{I(\beta_0)}^{1/2} + o({n}^{1/2})$ which means that all $c_k$ are asymptotically equal. Hence, the statistic has an asymptotically linear drift away from $\hat{\beta}$. In particular, if $\beta_A < \beta_B < \hat{\beta}$ then
\begin{equation*}
    |{S^A}^*(I)| > |{S^B}^*(I)| + o_p(1).
\end{equation*}
Consider now a statistic ${S^k}^{*}(F_g)$ for a generic $F_g$. By the self-normalized and multiplier central limit theorems,
\begin{equation*}
    {S^{k}}^*(F_g) \xrightarrow{d} N(0,1).
\end{equation*}
Thus, the flipped statistics are asymptotically centered and do not contain the deterministic drift present in the observed score. Because the observed statistics satisfy ${S^{A}}^*(I) < {S^{B}}^*(I)$ with probability tending to 1, while the flipped statistics remain $O_p(1)$, it follows that
\begin{equation*} 
    \Pr \left({S^{A}}^*(F_g) < {S^{A}}^*(I)\mid S^{(B)}(F_g) < S^{(B)}(I) \right) \longrightarrow 1.
\end{equation*}
The statement about the ordering of p-values follows immediately from their definition in the end of Section~2, 
and the statement for the effective score follows from the asymptotic coincidence of the p-values between effective and standardized tests (consequence of \cref{eq:asy_equi_var_flipscores}).
\end{proof}
\section{Bisection algorithms}

\cref{alg:equitailed,alg:symmetric} provide the pseudocode to the procedures described in Section 3.2.
\begin{algorithm}[htp]
\caption{Equitailed Confidence Interval via Flipscores Test Inversion}
\label{alg:equitailed}
\KwIn{
    $mod$: a fitted flip-scores model;\quad $dat$: data object;\quad $\beta_0$: previous point of the bisection iteration; \quad $\beta_C$: a current conservative bound; \quad $low$: $1$ if we look for the lower bound, $-1$ for the upper; \quad $move\_dir$: movement direction, $1$ if moving right, $-1$ if left; \quad $\epsilon$: previous bisection step-size; \quad $tol$: convergence tolerance; \quad $\alpha$: significance level.
}
\SetKwFunction{FConfBoundEquit}{ConfBoundEquit}
\SetKwProg{Fn}{Function}{:}{}
\Fn{\FConfBoundEquit{$mod, dat, \beta_0, \beta_C, low, move\_dir, \epsilon, tol, \alpha$}}{
    $\epsilon \gets \epsilon / 2$\;
    $\beta_0 \gets \beta_0 + move\_dir \times \epsilon$\;

    \eIf{$lower == 1$}{
        $alternative \gets$ ``greater''\;
    }{
        $alternative \gets$ ``less''\;
    }

    Compute $p \gets$ \textsc{PermutationTest}$(mod, dat, \beta_0, alternative)\$p.value$\;

    \uIf{$p < \alpha/2$}{
        $\beta_C \gets \beta_0$; \tcp*[f]{Conservative bound found}
    }
    \uIf{$\epsilon < tol$}{
        \Return{$\beta_C$; \tcp*[f]{Return the current bound}}
    }
    \uElseIf{$p < \alpha/2$}{
        \tcp{Move toward $\hat{\beta}$ to increase $p$}
        \Return{\FConfBoundEquit{$mod, dat, \beta_0, \beta_C, low, low, \epsilon, tol, \alpha$}}\;
    }
    \Else{
        \tcp{Move away from $\hat{\beta}$ to decrease $p$ or find other $\beta_0$ with $p=\alpha/2$}
        \Return{\FConfBoundEquit{$mod, dat, \beta_0, \beta_C, low, -low, \epsilon, tol, \alpha$}}\;
    }
}
\end{algorithm}

\begin{algorithm}[htp]
\caption{Symmetric Confidence Interval via Permutation Test Inversion}
\label{alg:symmetric}
\KwIn{
    $mod$: a fitted flip-scores model;\quad $dat$: data object;\quad $\hat{\beta}$: estimate of the parameter of interest; \quad $\delta$: previous half-width of the bisection iteration; \quad $\delta_C$: a current conservative half-width; \quad $move\_dir$: movement direction, $1$ if moving away from $\hat{\beta}$, $-1$ if moving toward; \quad $\epsilon$: previous bisection step-size; \quad $tol$: convergence tolerance; \quad $\alpha$: significance level.
}
\SetKwFunction{FConfBoundSymm}{ConfBoundSymm}
\SetKwProg{Fn}{Function}{:}{}
\Fn{\FConfBoundSymm{$mod, dat, \hat{\beta}, \delta, \delta_C, move\_dir, \epsilon, tol, \alpha$}}{
    $\epsilon \gets \epsilon / 2$\;
    $\delta \gets \delta + move\_dir \times \epsilon$\;

    $\beta_L \gets \hat{\beta} - \delta$; \quad $\beta_U \gets \hat{\beta} + \delta$\;

    Compute $p_L \gets$ \textsc{PermutationTest}$(mod, dat, \beta_L, ``greater'')\$p.value$\;
    Compute $p_U \gets$ \textsc{PermutationTest}$(mod, dat, \beta_U, ``less'')\$p.value$\;
    $p_{sum} \gets p_L + p_U$\;

    \uIf{$p_{sum} < \alpha$}{
        $\delta_C \gets \delta$; \tcp*[f]{Conservative interval found}
        }
    \uIf{$\epsilon < tol$}{
        \Return{$\delta_C$;\tcp*[f]{Return conservative interval}}
    }
    \uElseIf{$p_{sum} < \alpha$}{
        \tcp{Total $p$ too small $\Rightarrow$ reduce interval width}
        \Return{\FConfBoundSymm{$mod, dat, \hat{\beta}, \delta, \delta, -1, \epsilon, tol, \alpha$}}\;
    }
    \Else{
        \tcp{Total $p$ too large or equal to $\alpha$ $\Rightarrow$ increase interval width}
        \Return{\FConfBoundSymm{$mod, dat, \hat{\beta}, \delta, \delta_C, 1, \epsilon, tol, \alpha$}}\;
    }
}
\end{algorithm}

\section{Further images to the application}

Figure~5, in the main text, showed how flip-based intervals are generally wider than sandwich-based counterparts. This is underscored by \cref{fig:amplitude_histogram}, where the distributions of the log-ratios between the amplitudes of the flip-based and sandwich-based intervals are right-skewed and with positive means of 0.024 (Poisson) and 0.03 (negative binomial). 

\begin{figure}
    \centering
    \includegraphics[width = \linewidth]{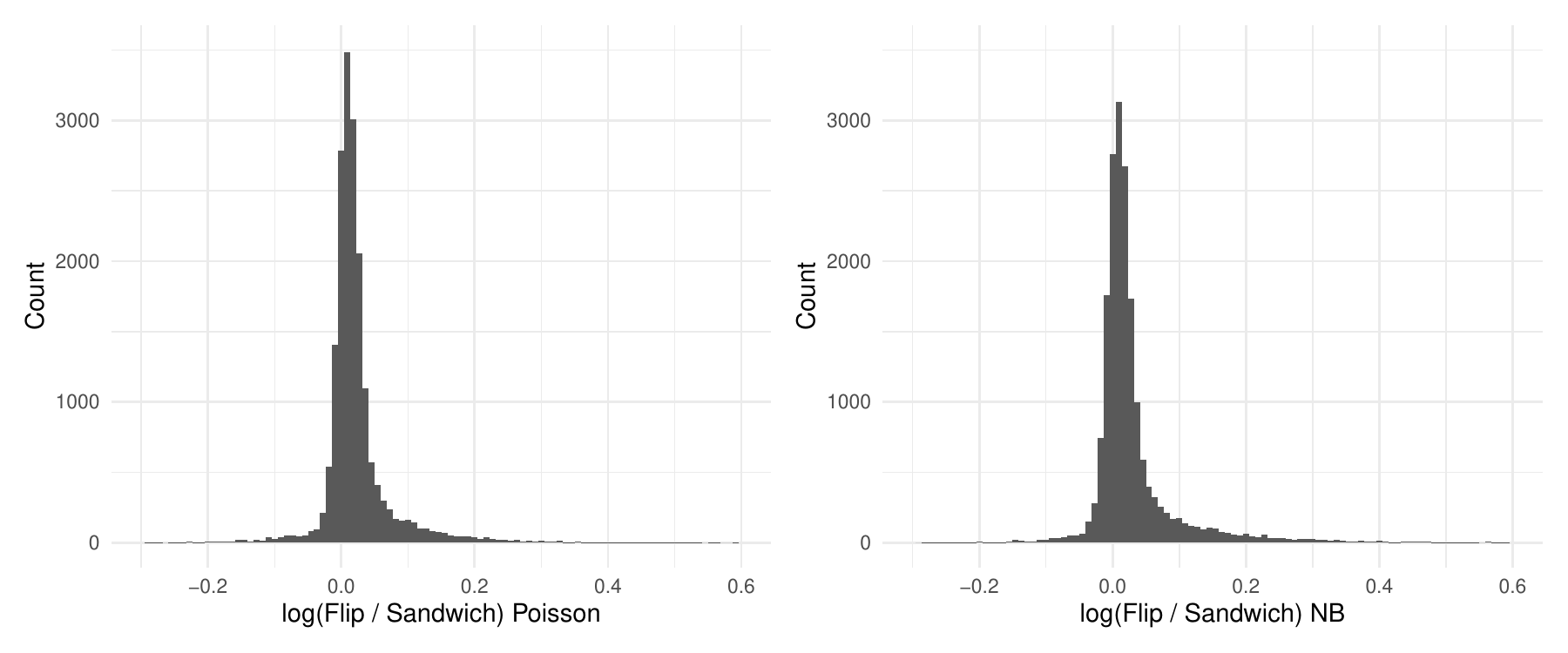}
    \caption{Distributions of the log-ratios of the amplitudes of the flip-based and sandwich-based intervals.}
    \label{fig:amplitude_histogram}
\end{figure}

\cref{fig:overlap_histograms} shows the distribution of the overlaps across the 18000+ genes. The flip-based distribution is more skewed towards the maximum value of 1 than the sandwich distribution. The Wald distribution is almost completely to the left due to the small width of the Poisson-based intervals.
\begin{figure}
    \centering
    \includegraphics[width=0.6\linewidth]{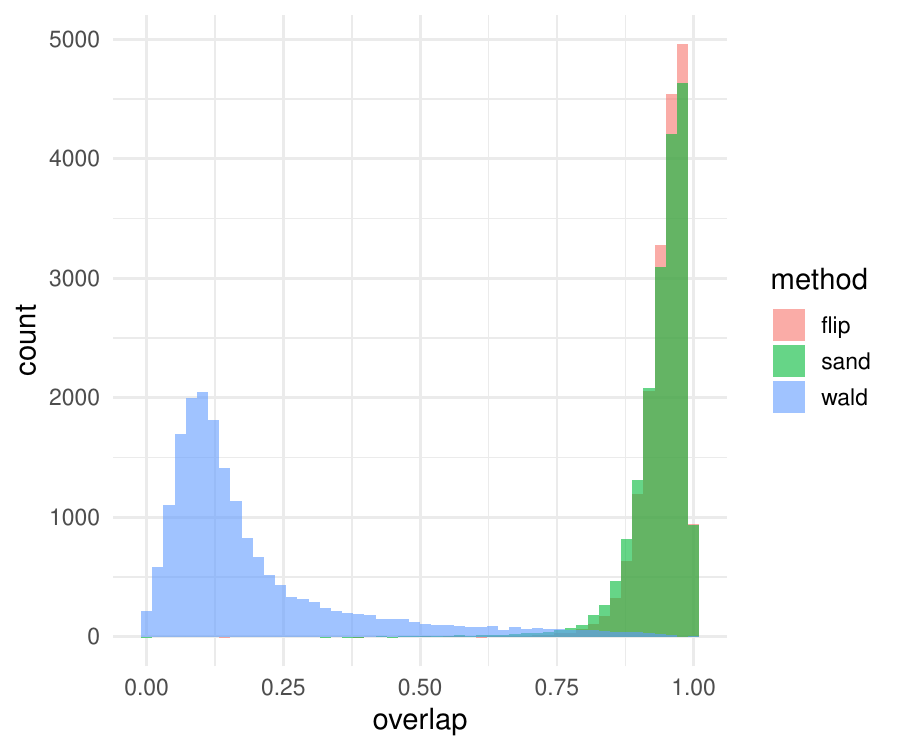}
    \caption{Distribution of the overlap of the confidence intervals built with each methods. The overlap defined in \cref{eq:def_overlap} is computed between the amplitudes of the intervals assuming Poisson and negative binomial model specifications.}
    \label{fig:overlap_histograms}
\end{figure}

\end{document}